\newtheorem{theorem}{Theorem}
\newtheorem{corollary}[theorem]{Corollary}
\newtheorem{proposition}[theorem]{Proposition}
\theoremstyle{definition}
\crefname{claim}{Claim}{Claims}
\newcommand{\problemdef}[3]{
	\begin{center}\fbox{
	\begin{minipage}{150mm}
		\noindent
		#1
		\\
		\setlength{\tabcolsep}{3pt}
		\begin{tabularx}{\textwidth}{@{}lX@{}}
			\textrm{Input:}     & #2 \\
			\textrm{Question:}  & #3
		\end{tabularx}
	\end{minipage}}
	\end{center}
}
\let\oldlambda\lambda
\renewcommand{\lambda}{\ensuremath{\oldlambda}\xspace}
\let\oldalpha\alpha
\renewcommand{\alpha}{\ensuremath{\oldalpha}\xspace}
\let\oldDelta\Delta
\renewcommand{\Delta}{\ensuremath{\oldDelta}\xspace}
\newcommand{\EE}{\ensuremath{\mathcal E}\xspace}
\newcommand{\GG}{\ensuremath{\mathcal G}\xspace}
\newcommand{\HH}{\ensuremath{\mathcal H}\xspace}
\renewcommand{\SS}{\ensuremath{\mathcal S}\xspace}
\newcommand{\TT}{\ensuremath{\mathcal T}\xspace}
\newcommand{\UU}{\ensuremath{\mathcal U}\xspace}
\newcommand{\VV}{\ensuremath{\mathcal V}\xspace}
\newcommand{\WW}{\ensuremath{\mathcal W}\xspace}
\newcommand{\eps}{\ensuremath{\varepsilon}\xspace}
\renewcommand{\epsilon}{\eps}
\newcommand{\ignore}[1]{}
\renewcommand{\leq}{\leqslant}
\renewcommand{\geq}{\geqslant}
\renewcommand{\ge}{\geqslant}
\renewcommand{\le}{\leqslant}
\newcommand{\CCVM}{{\sc Challenge the Champ Value Maximization}\xspace}
\newcommand{\CCVMD}{{\sc Challenge the Champ Value Maximization-Dag}\xspace}
\newcommand{\TDM}{{\sc 3-D-Matching}\xspace}
\newcommand{\Pb}{\ensuremath{\mathsf{P}}\xspace}
\newcommand{\NP}{\ensuremath{\mathsf{NP}}\xspace}
\newcommand{\NPC}{\ensuremath{\mathsf{NP}}\text{-complete}\xspace}
\newcommand{\NPH}{\ensuremath{\mathsf{NP}}\text{-hard}\xspace}
\newcommand{\el}{\ensuremath{\ell}\xspace}
\title{Maximizing Value in Challenge the Champ Tournaments}
\author[1]{Umang~Bhaskar\thanks{Supported by the DAE, Government of India, project no.~RTI4001. (email: umang@tifr.res.in).}}
\author[1]{Juhi~Chaudhary\thanks{Supported by the DAE, Government of India, project no.~RTI4001. (email: juhi.chaudhary@tifr.res.in).}}
\author[2]{Palash~Dey$^*$\thanks{Supported by the SERB, Government of India, grant no. CRG/2022/003294. (email:palash.dey@cse.iitkgp.ac.in).}}
\affil[1]{\small School of Technology and Computer Science, Tata Institute of Fundamental Research, Mumbai, 
India.}
\affil[2]{\small Department of Computer Science, Indian Institute of Technology Kharagpur, India.}
\date{}
\begin{document}

\maketitle

\begin{abstract}

A tournament is a method to decide the winner in a competition, and describes the overall sequence in which matches between the players are held. While deciding a worthy winner is the primary goal of a tournament, a close second is to maximize the value generated for the matches played, with value for a match measured either in terms of tickets sold, television viewership, advertising revenue, or other means. Tournament organizers often seed the players --- i.e., decide which matches are played --- to increase this value.

We study the value maximization objective in a particular tournament format called \emph{Challenge the Champ}. This is a simple tournament format where an ordering of the players is decided. The first player in this order is the initial champion. The remaining players in order challenge the current champion; if a challenger wins, she replaces the current champion. We model the outcome of a match between two players using a complete directed graph, called a \emph{strength graph}, with each player represented as a vertex, and the direction of an edge indicating the winner in a match. The value-maximization objective has been recently explored for knockout tournaments when the strength graph is a directed acyclic graph (DAG). 
We extend the investigation to Challenge the Champ tournaments and general strength graphs. We study different representations of the value of each match, and completely characterize the computational complexity of the problem. 
\bigskip

\noindent\textbf{Keywords:} Tournaments, Challenge the Champ, Algorithms, NP-hardness, Value Maximization
\end{abstract}

\section{Introduction}
Tournaments\footnote{In this paper, a \emph{tournament} will denote a competition format, and not a complete directed graph.} are a fundamental mechanism for determining winners in diverse competitive settings by systematically comparing participants through a series of pairwise matches. The most illustrative example of tournaments is sports competitions, ranging from prestigious international events like the Olympics and World Cups to local school sports and college contests. Different tournament formats, such as \emph{knockout} tournament \cite{suksompong2021tournaments,williams_moulin_2016}, \emph{round-robin} tournament \cite{russell2009manipulating,suksompong2016scheduling}, \emph{double-elimination tournament} \cite{aziz2018fixing,stanton2013structure}, \emph{Swiss-Sytem} \cite{sauer2024improving,sziklai2022efficacy} are commonly employed and have been widely studied by researchers over the years, where each format offers a distinct approach with its own advantages and disadvantages for ranking and elimination.

Challenge the Champ --- also known as stepladder --- is one such tournament format, which, despite its simplicity, is relatively underexplored in the literature compared to other tournament variants. Given $n$ players, a Challenge the Champ tournament proceeds in $n-1$ rounds. A player is initially chosen as the champ, and in each successive round, a new player challenges the current champ. If a player defeats the current champ, she becomes the new champ. The champ at the end of the final round is the tournament winner. See \cref{fig1} for an illustration. Challenge the champ is a \emph{single-elimination tournament}, where a player is eliminated once she loses a match. Thus, in a single-elimination tournament with $n$ players, there are exactly $n-1$ matches, though there are of course many ways to decide the player in each match \footnote{A \emph{knockout} tournament, which takes the form of a binary tree, is also sometimes referred to as a single-elimination tournament. However, we use the more general definition of single-elimination tournaments in this paper}.

Note that the order in which the players challenge the current champ --- called the \emph{seeding} --- significantly influences the outcome of the tournament. E.g., a fixed player $i$ has a better likelihood of winning the tournament if she challenges the champ late in the tournament after the stronger players have been eliminated, rather than very early.

Owing to their simplicity and often dramatic nature --- allowing a new entrant to beat the current champion and take over the title ---  Challenge the Champ tournaments and its variants are used in multiple sports, including ten-pin bowling, squash, badminton, and basketball~\cite{wiki,ArlegiD20}. Beyond the world of sports, a continuous variant of stepladder tournaments is also used for ranking workers in organisations~\cite{Pongou17}, making it a compelling tournament format for theoretical studies. In fact, some of our results obtained from studying Challenge the Champ tournaments are applicable to all single-elimination tournaments, further reinforcing their importance.

\begin{figure}[t]
 \centering
    \includegraphics[scale=.6]{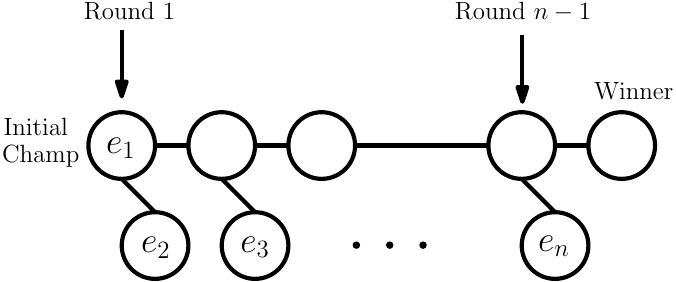}
    \caption{A Challenge the Champ tournament, with players seeded $(e_1,e_{2},\ldots,e_{n})$. In each round $i$, player $e_{i+1}$ challenges the current champ --- the winner of the previous round.}
    \label{fig1}
\end{figure}

\newcommand{\mrrb}[2]{\multirow{#1}{2.4cm}{\centering #2}}
\renewcommand{\arraystretch}{1.5}
\newcolumntype{Y}{>{\centering\arraybackslash}X}
\renewcommand\tabularxcolumn[1]{m{#1}}
\captionsetup[table]{skip=6pt}
\begin{table*}[t]
  \setlength{\tabcolsep}{8pt}
  \centering
  \caption
  {Our results for value maximization in Challenge the Champ tournaments.}
  \def\fbs{3.5cm}
\small
  \begin{tabularx}{.95\textwidth}{|@{}Y|Y|Y|Y@{}|}%
 \hline
      \textbf{Problem} & \textbf{Pair-Based} & \textbf{Win-Count-Based} & \textbf{Player-Popularity-Based}	\\
      \hline
      \textsc{CTC-VM-Dag} & \NPC for binary values (\cref{thm:dag:npc}) & \textsf{P} (\cref{thm:pp-alg2}) & \textsf{P} (\cref{thm:pp-alg1}) \\
       \hline
       \multirow{2}{*}{\centering \textsc{CTC-VM}} &  \multirow{2}{1.5in}{\centering \NPC(Follows from \cref{thm:dag:npc})} & \multirow{2}{1.5in}{\centering \NPC for binary values (\cref{thm:ccvm-nph-bin}) and linear-after-threshold (\cref{thm:ccvm-nph-lin})}  & \textsf{P} for binary values (\cref{thm:pp-alg}) \\\cline{4-4}
       &  &  & \NPC for ternary values (\cref{thm:rb-np-com}) \\
  \hline
  \end{tabularx}
  \label{table:challenge-the-champ}
\end{table*}

Challenge the Champ tournaments have previously attracted some attention from researchers, though this has been focused on the complexity of manipulation to ensure a given player wins the tournament~\cite{mattei2015complexity,chaudhary2024}.
We are, however, interested in the objective of value maximization in sports tournaments. Arguably, many competitions are organized to maximize some measure of value --- whether advertising revenue, viewership, or attendance. Given a tournament format --- such as knockout, or Swiss-system --- often this value-maximization objective influences the matches that are played as well, e.g., through the choice of groups, or seeding of the players. This objective is measured by assigning a non-negative integer value to each potential match between the players. Such an approach reflects the practical scenario where matches hold varying levels of importance, influenced by factors like geopolitics, historical rivalries, intrinsic fan following, or the relative strength of the teams. Prioritizing high-value matches becomes essential, especially when only a subset of potential matches can be played, as it helps attract viewers and advertisers. The \emph{tournament value} is then defined as the total sum of the values of all matches that are actually played. This metric is crucial for evaluating and optimizing the overall appeal and financial success of the tournament. We study the problem of value maximization in Challenge the Champ tournaments. Given the expanding body of research on tournaments, we believe that studying value maximization in the simple but interesting Challenge the Champ tournament format is an important research problem.

To complete the picture, we need a way of determining the outcome when two players compete. For this, we use a \emph{strength graph}, which is a complete directed graph with a vertex for each player. A directed edge $(i,j)$ indicates that player $i$ beats player $j$ in a match. Although value maximization has been previously examined in knockout tournaments \cite{gupta2024exercise,chaudhary2024make}, the investigation is restricted to the case where the strength graph is a directed acyclic graph (DAG). Our study extends this inquiry to Challenge the Champ tournaments. We investigate scenarios where the strength graph is either a DAG or a complete directed graph, and provide a thorough analysis of the computational complexity of value maximization in Challenge the Champ tournaments.\\

\noindent{\textbf{Related Work.}} 
Challenge the Champ tournaments are also sometimes called stepladder tournaments in previous work. These have been studied from the perspective of satisfying axiomatic notions of fairness~\cite{ArlegiD20}, as well as the characterizing strength graphs where a favorite player can win the tournament~\cite{knockout,YangD21}.~\citet{ArlegiD20} highlight instances of stepladder tournaments being utilized in sports competitions, such as ten-pin bowling and squash. Stepladder tournaments are also useful in firms.~\citet{Pongou17} present examples, and discuss the relation of the ranking of workers obtained from stepladder tournaments with their importance in the firm.

For Challenge the Champ tournaments with probabilistic strength graphs (rather than deterministic, as in our case),~ \citet{mattei2015complexity} investigated a setting where players can be bribed to lower their winning probability against
the initial champ. The goal was to maximize the probability of the initial champ winning the tournament
by bribing the other players while not exceeding a given budget for the bribes. Building on this,~\citet{chaudhary2024} extended the research by examining the problem with respect to various parameters. 

For tournament value maximization,~\citet{chaudhary2024make} initiated this study for knockout tournaments under the name \textsc{Tournament Value Maximization} in a deterministic setting, where players have a strict strength ordering (the strength graph forms a DAG).  Their work explored various constraints on tournament value functions to optimize the overall tournament value. In a related study,~\citet{gupta2024exercise} examined a similar problem but without assuming a strict linear order of player strengths. Their focus was on determining whether a specific seeding could guarantee that a designated set of games, known as demand matches, would occur. This problem can be viewed as a special case of \textsc{Tournament Value Maximization}, where demand matches are assigned a value of 1, and other matches a value of 0, with the objective of ensuring the total value meets or exceeds the number of demand matches.
Additionally,~\citet{dagaev2018competitive} investigated a restricted variant of tournament value maximization, where each player has a unique strength value, and the value of a game is defined by a linear combination of its ``quality" (the sum of the players' strengths) and its ``intensity" (the absolute difference between the players' strengths). They characterized scenarios in which either a ``close” seeding, a ``distant” seeding, or any seeding could be optimal, demonstrating that their restricted cases can be solved efficiently in linear time.

\section{Problem Statement and Preliminaries}
 For a positive integer $n$, define $[n] := \{1, 2, \ldots, n\}$. An instance of \CCVM is given by a set of players $N = [n]$, a strength graph, which is a complete directed graph $\TT = (N,E)$ with a directed edge $(i,j)$ indicating that player $i$ beats player $j$ in a match, and a value function $v$ for each possible match in the tournament. We present results for both the case when $\TT$ is a DAG and when $\TT$ contains directed cycles. If $\TT$ is a DAG, we will assume all edges are oriented from larger indices to smaller indices. Hence $n$ beats every other player, while $1$ is beaten by every other player. If a player $i$ defeats a player $j$, we say that $i$ is \emph{stronger} than $j$, or equivalently, $j$ is \emph{weaker} than $i$.

Given a \CCVM instance with $n$ players, a seeding $\sigma \in \mathbb{S}_n$ (where $\mathbb{S}_n$ is the symmetric group on $n$ elements) is a permutation of the set of players that completely determines the order of the matches. The player $\sigma(1)$ is the initial champ. In each round $r=1, \ldots, n-1$, the champ is challenged by player $\sigma(r+1)$. The winner of the match (as determined by the strength graph $\TT$) is the champ for the next time step. The winner in the last round is the winner of the tournament. Thus, given $N$ and $\TT$, each seeding $\sigma$ completely determines the set of matches that take place. Given a seeding $\sigma$, we define $M_i(\sigma)$ as the set of players that play a match with player $i$ and lose to it, and define $w_i(\sigma) = |M_i(\sigma)|$. Note that player $i$ wins exactly $w_i(\sigma)$ matches.

We consider a number of different possible settings for the tournament value $V(\sigma)$ of a seeding $\sigma$ for a Challenge the Champ tournament. \\

\noindent{\textbf{Player-popularity-based.}} We first consider a natural restriction when each player has an associated popularity $p_i \in \mathbb{{Z}_+}$, and the value of each match is determined by the popularity of the winning player. Thus, each match player $i$ wins contributes value $p_i$. Thus the value for the seeding $\sigma$ is $V(\sigma) = \sum_{i \in N} p_i w_i(\sigma)$, and such a value function is called \emph{player-popularity-based}. This restriction captures the intrinsic popularity of certain teams and players among their fans, irrespective of the opponent or venue. This variant has also been explored by \citet{chaudhary2024make}.

Mathematically, when the strength graph is a DAG, the player-popularity-based value function is \emph{equivalent} to the tournament value function where the value of every match is defined as the sum of the popularity values of the two players participating in that match. This equivalence holds for any single elimination tournament (not necessarily Challenge the Champ), if the strength graph is a DAG. The equivalence can be seen easily: Let $V'(\sigma)$ denote the tournament value when the value of the match between players $i$ and $j$ is $v'(i,j)=p_{i}+p_{j}$, and let $V(\sigma)$ denote the tournament value when the value of the match is $v(i,j)=p_{i}$, where $i>j$. Then $V'(\sigma)=V(\sigma)+\sum_{i\in[n-1]}p_{i}$, since every player except player $n$ loses exactly one match in any single elimination tournament. Thus, when the match values are modified, all players except the winner additionally contribute their popularity value exactly once, for the match where they lose. Thus, since the two tournament value functions are equivalent for DAGs, we will present results for the player-popularity-based value function. \\


\noindent{\textbf{Win-count-based.}} We next consider tournament value functions where the value of each match depends not only on the winning player, but also by her track record of victories. Specifically, the value of a match increases with the number of wins the winning player has accumulated. We refer to such value functions as \emph{win-count-based} value functions. Formally, each player has a win-value $f_i: [n-1] \rightarrow \mathbb{Z}_+$ that gives the value of each match won by player $i$. I.e., the $k$th match won by player $i$ has value $f_i(k)$. Then the value for the seeding $\sigma$ for win-count-based value function is $V(\sigma) = \sum_{i \in N} \sum_{k=1}^{w_i(\sigma)} f_i(k)$. Note that if each player has a constant win-value function , i.e., if $f_i(k) = p_i$ for every $k\in [w_{i}(\sigma)]$, this is the same as player-popularity-based value functions.

Within win-count-based values, we consider two further restrictions. We say a win-count-based tournament value function is \emph{binary-valued} if every player $i$ has a threshold $\lambda_i$, and the win-value function $f_i(x) = 1$ if $x = \lambda_i$, and is zero otherwise. Thus, player $i$ gets a value of 1 if she wins at least $\lambda_i$ games, and gets value 0 otherwise. The tournament value function is thus $V(\sigma) = |\{i: w_i(\sigma) \ge \lambda_i\}|$. These value functions support the egalitarian objective of ensuring that more players/teams achieve their individual thresholds. Maximizing the tournament value, in this case, is the same as obtaining a seeding that enables as many players as possible to meet their thresholds, rather than allowing a single player to dominate by participating in the majority of matches.

We say a win-count-based tournament value function is \emph{linear-after-threshold} if every player $i$ has a threshold $\lambda_i$, and the win-value function $f_i(x) = 1$ if $x \ge \lambda_i$, and is zero otherwise. Thus, player $i$ gets a value of 1 for every game she wins after $\lambda_i-1$, and the tournament value function is $V(\sigma) = \sum_{i \in N} \max\{0, w_i(\sigma) - \lambda_i + 1\}$. \\

\noindent{\textbf{Pair-based.}} Finally, we consider \emph{pair-based} value function, where the value of each match depends on both the players in the match. Thus, such value functions account for matches where the players have a historic rivalry or are particularly competitive. Here, each pair of players $i$, $j$ has a pair value $f_{ij} \in \mathbb{Z}_+$, which is the value contributed by a match between $i$ and $j$. The value of a seeding $\sigma$ is $V(\sigma) = \sum_{i \in N} \sum_{j \in M_i(\sigma)} f_{ij}$. 

Again, pair-based value functions generalize player-popularity-based value functions. To see this, if each player $i$ has popularity $p_i$, consider the pair-based value function where the pair $\{i,j\}$ has value $p_i$ if $i$ beats $j$, and $p_j$ otherwise. Clearly, each match then has the same value in either case.

These value functions have been studied before for value maximization in knockout tournaments~\cite{chaudhary2024make}. Pair-based value functions are also called \emph{round-oblivious} value functions. We will further consider restrictions on the value of each match to be binary or ternary (i.e., the value of each match is in $\{0,1\}$ or in $\{0,1,2\}$).

More formally, the definition of \CCVM is given below.
\problemdef{{\CCVM \textsc(CTC-VM)}}{
Given a set $N$ of players, a strength graph $\mathcal{T}$ on $N$ which is a complete directed graph, a tournament value function $V$, and a target value $t$.
}{
Is there a tournament seeding $\sigma$ for the players in $N$ such that the tournament value $V(\sigma) \ge t$?
}

The problem \CCVMD \textsc{(CTC-VM-Dag)} is defined in a similar manner, except that the input strength graph, in this case, is a DAG.\\

\noindent{\textbf{Seedings and Caterpillars.}} For undirected graphs, a \emph{caterpillar graph} is a tree in which the removal of all pendant vertices (leaves) yields a path, which we refer to as the \emph{backbone}. We extend this to directed graphs: a \emph{caterpillar arborescence} is an arborescence where removing all leaves (vertices with out-degree zero) yields a directed path. We call this path the \emph{backbone} of the caterpillar. A \emph{spanning caterpillar arborescence} of a directed graph is a subgraph that is a caterpillar arborescence on all the vertices. For brevity, since we will only consider spanning caterpillar arborescences on the directed graph $\TT$, we omit the term arborescence from the description.

Any seeding $\sigma$ in a Challenge the Champ tournament induces a spanning caterpillar in the strength graph, and vice versa. To see this, given a seeding $\sigma$, let the backbone of the caterpillar be the players that win at least one match, in the order given by $\sigma$. All remaining players lose their first match to a player in the backbone. Attach the remaining players to the player they lose to in the backbone. This gives a spanning caterpillar. For the converse, given a spanning caterpillar, the seeding is obtained by the sequence of players in the backbone, interspersed by the leaves attached to each player in the backbone in any order. Figure~\ref{fig:seedingsca} gives an example showing this correspondence. Thus, instead of a seeding, we can equivalently specify the sequence of players in the backbone of a caterpillar, and the attachment of the remaining players (i.e., the leaves) to the players in the backbone. We will frequently use this correspondence in our proofs. Corollary~\ref{corr:sc} also establishes a connection between pair-based value functions and spanning caterpillars. We note that there is prior work linking spanning caterpillars and tournaments (see, e.g., \cite{reid1983embedding,reid1991majority}). Our work significantly strengthens this connection by demonstrating that caterpillars provide a natural framework for understanding Challenge the Champ tournaments and establishing hardness results for caterpillars.

\begin{figure}[ht]
\centering
\begin{tikzpicture}[baseline=-2cm, 
  every node/.style={circle, draw, minimum size=4mm, inner sep=0pt},
  level distance=3.5mm,
  node distance=3.5mm,
  edge from parent/.style={draw, -latex},
  edge from parent path={(\tikzparentnode) -- (\tikzchildnode)}
  ]

  \node (v1) {3};
  \node (v2) [right=of v1] {4};
  \node (v3) [right=of v2] {4};
  \node (v4) [right=of v3] {5};
  \node (v5) [right=of v4] {7};
  \node (v6) [right=of v5] {7};
  \node (v7) [right=of v6] {7};

  \node (v1b) [below=of v1, xshift=2.5mm] {4};
  \node (v2b) [below=of v2, xshift=2.5mm] {2};
  \node (v3b)  [below=of v3, xshift=2.5mm] {5};
  \node (v4b)  [below=of v4, xshift=2.5mm] {7};
  \node (v5b)  [below=of v5, xshift=2.5mm] {1};
  \node (v6b)  [below=of v6, xshift=2.5mm] {6};

  \draw (v1) -- (v2) -- (v3) -- (v4) -- (v5) -- (v6) -- (v7);
  \draw (v1) -- (v1b);
  \draw (v2) -- (v2b);
  \draw (v3) -- (v3b);
  \draw (v4) -- (v4b);
  \draw (v5) -- (v5b);
  \draw (v6) -- (v6b);

\end{tikzpicture}
\qquad 
\begin{tikzpicture}[
  every node/.style={circle, draw, minimum size=4mm, inner sep=0pt},
  node distance=8mm, 
  level distance=10mm, 
  edge from parent/.style={draw, -latex},
  edge from parent path={(\tikzparentnode) -- (\tikzchildnode)}
  ]

  \node (v1) {7}
    child {node (v6) [below left of=v1, xshift=-2mm, yshift=-2mm] {1}}
    child {node (v7) [below right of=v1, xshift=2mm, yshift=-2mm] {6}}    
    child {node (v2) [below of=v1, xshift=0mm, yshift=-2mm] {5}
      child {node (v3) {4}
        child {node (v4) [below left of=v3, xshift=-2mm, yshift=-2mm] {2}}
        child {node (v5) [below right of=v3, xshift=2mm, yshift=-2mm] {3}}
      }
    };

  

\end{tikzpicture}
\caption{The first figure gives an example with 7 players showing the seeding $(3,4,2,5,7,1,6)$. In the strength graph (not shown), each player $j$ defeats all players $i < j$. Hence player 7 is the strongest and player 1 the weakest. The second figure shows the resulting spanning caterpillar arborescence.}

\label{fig:seedingsca}
\end{figure}
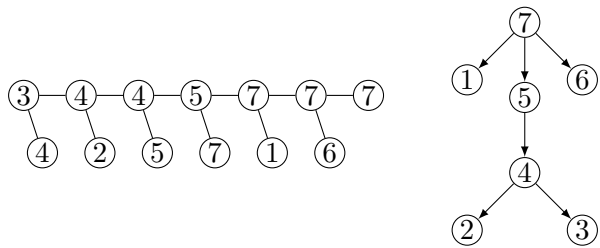

We will also use the following result in our proofs.

 \begin{proposition}[\cite{redei1934kombinatorischer}] \label{prop:ham}
    Given a complete directed graph $G$, a Hamiltonian path in $G$ can be found in polynomial time.
\end{proposition}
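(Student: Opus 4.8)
The plan is to prove this classical result of Rédei by an inductive, insertion-based argument that is manifestly constructive and runs in polynomial time. I would build a Hamiltonian path incrementally, maintaining the invariant that after processing $k$ of the vertices I have a directed Hamiltonian path $u_1 \to u_2 \to \cdots \to u_k$ on those $k$ vertices. The base case $k=1$ is trivial. For the inductive step, I take the next vertex $x$ and insert it into the current path so that the result is again a Hamiltonian path, now on $k+1$ vertices.

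The insertion proceeds by three cases, using the fact that between any two vertices exactly one directed edge is present (since $G$ is complete). First, if the edge $(x,u_1)$ is present, prepend $x$ to obtain $x \to u_1 \to \cdots \to u_k$. Second, if the edge $(u_k,x)$ is present, append $x$ at the end. Otherwise neither holds, so we have both edges $(u_1,x)$ and $(x,u_k)$. In this remaining case I would scan the path and apply a discrete intermediate-value argument: at the first position the edge points into $x$ (namely $(u_1,x)$) while at the last position it points out of $x$ (namely $(x,u_k)$), so there must be a consecutive pair $u_i,u_{i+1}$ with both edges $(u_i,x)$ and $(x,u_{i+1})$ present. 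Inserting $x$ between $u_i$ and $u_{i+1}$ then yields a Hamiltonian path on the $k+1$ vertices.

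For the running time, each insertion step scans the current path once to locate the correct position, costing $O(k)$ time at step $k$, so the total time to build the Hamiltonian path on all $n$ vertices is $O(n^2)$, which is polynomial. This is precisely the structure of insertion sort, with the tournament edges playing the role of (possibly intransitive) comparisons; the completeness of $G$ guarantees that each comparison returns a definite direction, so the procedure never stalls.

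The only step requiring care is the middle-insertion case, and the single observation needed there is the discrete intermediate-value argument guaranteeing a direction change along the path. Everything else is bookkeeping, so I do not anticipate any real obstacle: the argument is standard, and the polynomial bound follows immediately from the per-step linear scan.
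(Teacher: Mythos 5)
Your proof is correct. The paper does not actually prove this proposition --- it is stated as a known classical result with a citation to R\'edei (1934) --- so there is no in-paper argument to compare against. Your insertion-based construction is the standard constructive proof of the existence statement: the three-case analysis is sound (in the middle case, the sign of the edge between $x$ and $u_i$ flips from ``into $x$'' at $i=1$ to ``out of $x$'' at $i=k$, so a consecutive sign change exists), and the $O(n^2)$ bound from one linear scan per insertion establishes the polynomial-time claim. One minor remark: R\'edei's theorem as usually stated is the stronger fact that every tournament has an \emph{odd} number of Hamiltonian paths; what you prove (and what the paper actually uses) is just the algorithmic existence statement, for which your argument is exactly the right level of machinery.
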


 \noindent{\textbf{Our Contribution.}} We present a comprehensive analysis of the complexity of tournament value maximization for Challenge the Champ tournaments. Our work encompasses both \textsc{CTC-VM} and \textsc{CTC-VM-Dag} across various tournament value functions, including player-popularity-based, win-count-based, and pair-based tournament value functions, and restrictions to binary and ternary-valued functions. For a tabular view of the main results, see \cref{table:challenge-the-champ}.  
 
We start by delving into player-popularity-based tournament value functions in \cref{sec:ppb}. We first establish that when the strength graph is a DAG, the tournament value can be maximized in polynomial time. Further, in this setting, Challenge the Champ tournaments are optimal among all single-elimination tournaments. Specifically, we show:
\begin{itemize}
    \item For every strength graph with player-popularity-based tournament value functions, there exists a Challenge the Champ tournament that maximizes the total value over all single-elimination tournaments. Moreover, this maximum value and the seeding can be computed in polynomial time.
\end{itemize}

We note that in knockout tournaments, this problem is unresolved~\cite{chaudhary2024make}. Our proof, however, shows that the optimal value obtained for Challenge the Champ tournaments is an upper bound for all single-elimination tournaments, including knockout tournaments. 
 
 When the strength graph is no longer a DAG, the complexity of the problem varies with the number of distinct player popularity values. For binary values for the popularity of players, we give a polynomial-time algorithm, and observe that in this setting again, Challenge the Champ tournaments are optimal among all single-elimination tournaments. However, when the popularity of the players can take three values $\{0, 1, 2\}$, the problem becomes \NPH, \emph{even when there is a single player with popularity 2,} which implies that the problem is para-NP-hard when parameterized by the number of players with popularity value 2.
 \begin{itemize}
     \item  \textsc{CTC-VM} is polynomial-time solvable for player-popularity-based tournament value functions when each player's popularity is in $\{0,1\}$.
 \end{itemize}

 \begin{itemize}
     \item \textsc{CTC-VM} is \textsf{NP}-complete for player-popularity-based tournament value functions when each player's popularity is in  $\{0,1,2\}$.
 \end{itemize}

We also present a simple greedy approximation algorithm for \textsc{CTC-VM}, where each player's popularity is drawn from a set of $k$ distinct values by leveraging the result of \textsc{CTC-VM} for player-popularity-based tournament value functions where each player's popularity is restricted to $\{0,1\}$. 

 In \cref{sec:wcb}, we explore the more general case of win-count-based tournament value functions. 
We provide a polynomial-time dynamic programming algorithm for the case where the input is a DAG. However, for general strength graphs, the problem becomes hard even if the win-value function $f_i$ for each player is binary. This highlights that the presence of directed cycles is a crucial factor in the complexity of the problem.

 \begin{itemize}
     \item \textsc{CTC-VM-Dag} is polynomial-time solvable for win-count-based tournament value functions.
 \end{itemize}

\begin{itemize}
    \item \textsc{CTC-VM} is \textsf{NP}-complete even for binary-valued and linear-after-threshold win-count-based tournament value functions.
\end{itemize}

The last result also highlights a difference from player-popularity-based functions: while the latter becomes complex with ternary values, the former is challenging even with binary values.


In \cref{sec:ggv}, we investigate pair-based tournament value functions and prove that the problem is \textsf{NP}-complete even for the simplest setting when the strength graph is a DAG, and each pair value is binary. Specifically, we prove:

\begin{itemize}
    \item \textsc{CTC-VM-Dag} is \textsf{NP}-complete for pair-based tournament value functions when all pair values are binary.
\end{itemize}

We find this surprising, given the relative simplicity of Challenge the Champ tournaments compared to other tournament formats. Finally, we show that this last result also implies hardness for a problem of value maximization for spanning caterpillars.

\section{Player-Popularity-Based Tournament Value Functions} \label{sec:ppb}

We first demonstrate that for player-popularity-based tournament value functions with acyclic strength graphs, the optimal tournament value can be obtained in polynomial time.

    \begin{theorem}
         There is a polynomial-time algorithm for  \textsc{CTC-VM-Dag} for player-popularity-based tournament value functions.
         \label{thm:pp-alg1}
    \end{theorem}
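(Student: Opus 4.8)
The plan is to turn the problem into a closed-form optimization via the seeding--caterpillar correspondence and then exhibit a greedy optimum. Since $\TT$ is a DAG it is a transitive tournament, with player $i$ beating player $j$ exactly when $i>j$; hence the strongest player $n$ wins every single-elimination tournament and is the unique player never eliminated, while each of the other $n-1$ players is eliminated in exactly one match. Writing $a_i>i$ for the (index of the) player who eliminates $i$ in a given seeding $\sigma$ --- equivalently the parent of $i$ in the induced spanning caterpillar --- the tournament value is simply $V(\sigma)=\sum_{i\in[n-1]}p_{a_i}$. Because every eliminator of $i$ has index larger than $i$, this yields the upper bound $V(\sigma)\le\sum_{i=1}^{n-1}\max_{j>i}p_j$, valid for \emph{every} single-elimination tournament on $\TT$.

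Next I would define the greedy assignment that attempts to meet this bound term by term: let each player $i<n$ be eliminated by $g(i):=\argmax_{j>i}p_j$, breaking ties toward the smaller index. By construction the value of the resulting structure equals $\sum_{i=1}^{n-1}p_{g(i)}=\sum_{i=1}^{n-1}\max_{j>i}p_j$, so it is optimal \emph{provided} it is realizable by an actual Challenge the Champ seeding.

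Establishing realizability is the crux. An arbitrary ``best eliminator'' assignment need not come from any seeding, since the correspondence requires the induced arborescence to be a \emph{caterpillar}: its match-winning (internal) vertices must all lie on a single directed path. I would argue this from a suffix-maximum property. Any internal vertex $u$ satisfies $p_u=\max_{w\ge u}p_w$, since $g(i)=u$ for some $i<u$ forces $p_u\ge p_w$ for all $w\ge u$. Given two internal vertices $u<v$, I would then show that the ancestor chain $u,g(u),g^2(u),\dots,n$ must pass through $v$: the first vertex on this chain whose index is at least $v$ has popularity equal to $p_v$, and the smaller-index tie-break forces that vertex to be $v$ itself. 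Hence every two internal vertices are comparable, so they form a directed path and the assignment is a genuine spanning caterpillar of $\TT$ (each parent has larger index than its child, matching the strength graph). By the stated seeding--caterpillar correspondence it arises from a seeding attaining the upper bound, so the bound is tight and the greedy seeding is optimal.

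The algorithm is then immediate: a single right-to-left scan computes the suffix maxima $\max_{j>i}p_j$ in $O(n)$ time, giving the optimal value $\sum_{i=1}^{n-1}\max_{j>i}p_j$ to compare against the target $t$, together with the parents $g(i)$ from which the explicit optimal seeding is read off. The only delicate point is the tie-breaking: resolving $\argmax$ toward the smaller index is what keeps the internal vertices a single chain, and is therefore essential to the caterpillar argument above.
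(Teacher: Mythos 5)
Your proof is correct, and it arrives at the same optimal structure as the paper's --- the backbone consists of the suffix-maximum players (those more popular than every stronger player), and every other player is eliminated by the most popular player stronger than herself --- but your optimality argument takes a genuinely different route. The paper defines the caterpillar explicitly (backbone $i_k, i_{k-1}, \ldots, i_1$, where $i_1$ is the most popular player, $i_2$ the most popular among players stronger than $i_1$, and so on, with the interval of players between $i_{j-1}$ and $i_j$ attached as leaves to $i_j$), so realizability is immediate by construction; optimality is then proved by an order-statistics argument: for every $t \in [n-1]$, the $t$th largest match value in an arbitrary seeding is at most the $t$th largest value in the greedy solution, via the counting observation that all players more popular than $p_{i_{j+1}}$ lie in $[i_j]$ and can jointly win at most $i_j - 1 < t$ matches. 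You instead index matches by the eliminated player, so that $V(\sigma) = \sum_{i=1}^{n-1} p_{a_i}$ with $a_i > i$, which makes the upper bound $\sum_{i=1}^{n-1}\max_{j>i}p_j$ termwise and immediate; the burden then shifts to realizability of the pointwise-optimal assignment $g(i)=\argmax_{j>i}p_j$, which your suffix-maximum/chain argument handles correctly. Each approach buys something: yours makes the upper bound transparently valid for \emph{all} single-elimination tournaments (a fact the paper also wants, but must extract from its order-statistics proof in a follow-up remark), and it handles equal popularities by a tie-breaking rule where the paper resorts to perturbation. One quibble: your closing claim that breaking ties toward the \emph{smaller} index is essential is overstated --- any consistent rule works (e.g., always the largest index, in which case internal vertices are strict suffix maxima and the same chain argument goes through); what actually fails is inconsistent tie-breaking, e.g., with popularities $(0,0,1,1,1)$ the choices $g(1)=3$, $g(2)=4$, $g(3)=g(4)=5$ are all legal argmaxes but make $3$, $4$, $5$ internal with both $3$ and $4$ children of $5$, which is not a caterpillar and corresponds to no seeding.
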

    \begin{proof}
        As stated, since $\TT$ is a DAG, we assume player $i \in N$ beats all players $j < i$, and player $i$ has popularity $p_i$. We now describe a greedy algorithm that constructs an optimal spanning caterpillar. We assume $p_i \neq p_j$ for any two players $i$, $j$, else we can slightly perturb the popularity values. Now let player $i_1$ be the most popular player. Let $W_1 := [i_1-1]$ be all the players weaker than $i_1$. Let player $i_2$ be the most popular player in $N \setminus [i_1]$, and let $W_2 := [i_2-1] \setminus [i_1]$. Thus recursively, we define $i_j$ to be the most popular player in $N \setminus [i_{j-1}]$, and $W_j := [i_j-1] \setminus [i_{j-1}]$. We continue until we pick player $n$, i.e., for some $k$, $i_k = n$. Note that each player in $W_j$ is less popular and weaker than player $i_j$, and player $i_j$ is stronger than player $i_{j-1}$.
        
        The spanning caterpillar has players $i_k$, $i_{k-1}$, $\ldots$, $i_1$ (in this order) on the backbone, and each $i_j$ additionally has edges to all the players in $W_j$. Each player $i_j$ in the backbone (other than $i_1$) thus wins $|W_j|+1$ matches, while player $i_1$ wins $|W_1|$ matches. The total value is thus
        \[ (i_1-1)\cdot p_{i_1} + (i_2-i_1)\cdot p_{i_2} + \ldots + (n-i_{k-1})\cdot p_n.\] \, 

        We now show that this is, in fact, the maximum value obtainable in any seeding. To see this, consider any other seeding $\sigma$. Then there are $n-1$ matches played, and there are $n-1$ values obtained from these matches. Consider the $t$th largest value obtained, say $v^*$, for any $t \in [n-1]$. We will show that this value is at most the $t$th largest value obtained by our algorithm, completing the proof.

        Let $j \in [k]$ be the largest index so that $t \ge i_j$. 
        Note that $i_{j+1}$ is the most popular player in $N \setminus [i_j]$. Then it can be checked that the $t$th largest value obtained by our algorithm is $p_{i_{j+1}}$. Now assume for a contradiction that $v^* > p_{i_{j+1}}$. Since $v^*$ is the $t$th largest value in the seeding $\sigma$, players with popularity at least $v^*$ must have won at least $t$ matches in total. But all players with popularity greater than $i_{j+1}$ lie in $[i_j]$. Hence these players can win at most $i_j - 1$, which is strictly less than $t$. This gives a contradiction.
\end{proof}

As noted, the proof of Theorem~\ref{thm:pp-alg1} shows something stronger. 
Any single-elimination tournament consists of at most $n-1$ matches, since in every match, one player is eliminated. Our proof shows that if the strength graph is a DAG, and the value of every match is the popularity of the stronger player, then the Challenge the Champ tournament, obtains maximum value over all possible single-elimination tournaments, and this value (and the seeding that obtains it) can be computed in polynomial time. To see this, it is enough to consider in the proof the $t$th largest value obtained from any single-elimination tournament; the proof shows that this is at most the value obtained by the seeding given for the Challenge the Champ tournament.

However, this is not the case when the strength graph is not necessarily a DAG. Even for player-popularity-based tournament value functions, there are instances where different tournament trees can be optimal. Figure~\ref{fig:threebranches} gives such an example. For \emph{binary} player-popularity values, however, Challenge the Champ tournaments can again be shown to be optimal for arbitrary strength graphs among all single-elimination tournaments.

\begin{figure}[ht]
\centering
\begin{tikzpicture}[
  every node/.style={circle, draw, minimum size=5mm, inner sep=0pt},
  node distance=8mm,
  level distance=9mm,
  level 1/.style={sibling distance=16mm},
  level 2/.style={sibling distance=8mm},
  edge from parent/.style={draw, -latex},
  edge from parent path={(\tikzparentnode) -- (\tikzchildnode)}
  ]

  \node {$r$}
    child {node {$a_1$}
      child {node {$b_1$}}
    }
    child {node {$a_2$}
      child {node {$b_2$}}
    }
    child {node {$a_3$}
      child {node {$b_3$}}
    };
\end{tikzpicture}
\caption{An example of a (partial) strength graph with cycles where a Challenge the Champ tournament is not optimal. Each leaf $b_i$ beats all the non-leaf nodes except $a_{i}$, creating cycles. The root has popularity $2$, intermediate $a_i$ vertices have popularity $1$, and the leaves have popularity $0$. Here, the maximum value achievable by a single elimination tournament is $9$ whereas the maximum value achievable in a Challenge the Champ tournament is $7$.}
\label{fig:threebranches}
\end{figure}
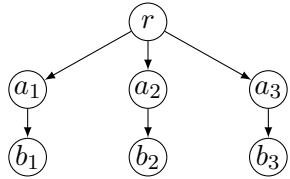

Next, we show that \textsc{CTC-VM} can be solved in polynomial time for player-popularity-based tournament value functions when each player's popularity is either 0 or 1. This result is significant because here, as for DAGs, not only do we get a polynomial time algorithm, but the value obtained is the largest among all single-elimination tournaments. Further, for general strength graphs, this is the largest class of valuations for which we obtain positive complexity results. For ternary popularity values, the problem becomes \NPH, as we show later.

\begin{restatable}{theorem}{THMD}\label{thm:pp-alg}
    There is a polynomial-time algorithm for  \textsc{CTC-VM} for player-popularity-based tournament value functions when each player's popularity is either 0 or 1. The value obtained is the maximum among all single-elimination tournaments.
\end{restatable}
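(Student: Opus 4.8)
The plan is to work entirely through the seeding--caterpillar correspondence and to reinterpret the objective as a counting problem. Writing $P=\{i : p_i=1\}$ for the set of popular players and $Z=N\setminus P$ for the unpopular ones, since every match contributes the popularity of its winner we have $V(\sigma)=\sum_{i\in P} w_i(\sigma)$; that is, $V(\sigma)$ simply counts the matches won by popular players. I would phrase everything in terms of the elimination structure: in \emph{any} single-elimination tournament, assigning to each non-winner the player who eliminated it yields a spanning arborescence of $\TT$ (acyclic, since an eliminated player never plays again), in which the edges with a popular tail are exactly the matches won by popular players, and Challenge the Champ tournaments are precisely those arborescences that are caterpillars. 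Thus the statement reduces to bounding, and then realizing by a caterpillar, the number of arborescence edges whose tail is popular.

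For the upper bound, let $D=\{i\in Z : i \text{ beats every player of } P\}$ be the unpopular players that no popular player beats. The key step is a two-part counting bound valid for \emph{every} single-elimination tournament, obtained by splitting the popular-won matches according to whether the loser is popular or unpopular. Matches won by a popular player against a popular player correspond to arborescence edges inside $P$; these form a forest on $P$, so there are at most $\max(0,|P|-1)$ of them. Every match won by a popular player against an unpopular player $u$ witnesses a popular player beating $u$, so $u\notin D$; hence there are at most $|Z\setminus D|=|Z|-|D|$ of these. Adding the two bounds gives $V(\sigma)\le (n-1)-|D|$ whenever $P\ne\emptyset$ (and $V(\sigma)=0$ when $P=\emptyset$). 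Since this argument never uses the caterpillar shape, it bounds the value of all single-elimination tournaments at once.

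It then remains to exhibit a caterpillar meeting this bound, for which I would use Hamiltonian paths (\cref{prop:ham}): take a Hamiltonian path $\rho_1\to\cdots\to\rho_k$ of the sub-tournament on $P$ and a Hamiltonian path $\delta_1\to\cdots\to\delta_t$ of the sub-tournament on $D$. Reading the backbone in champion order as $\rho_k,\rho_{k-1},\ldots,\rho_1,\delta_t,\delta_{t-1},\ldots,\delta_1$, consecutive popular players beat the previous champion by the first Hamiltonian path, the first $D$-player beats $\rho_1$ because every member of $D$ beats every popular player, and consecutive $D$-players follow the second Hamiltonian path; every unpopular player outside $D$ is beaten by some popular player and is hung as a leaf on such a backbone vertex (winning nothing), while no leaf is attached to any $D$-vertex. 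Each $\delta$ then wins exactly one match and every remaining win goes to a popular player, so the unpopular players win exactly $|D|$ matches and $V=(n-1)-|D|$, matching the bound; all steps are polynomial. The main obstacle, and the one place where the caterpillar structure is genuinely exploited, is the correct treatment of $D$: these players cannot be leaves (no popular player beats them), and naively interleaving them among the popular champions would either cost extra popular dethronements or force them to absorb leaves — placing them instead as a terminal Hamiltonian chain of champions, each winning once and carrying no leaves, is precisely what makes a single caterpillar simultaneously optimal and equal to the global single-elimination optimum. I would finish by treating the degenerate cases $P=\emptyset$ and $D=\emptyset$ separately and by combining the two inequalities to conclude optimality of this caterpillar among all single-elimination tournaments.
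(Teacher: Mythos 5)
Your proposal is correct and follows essentially the same route as the paper: you use the identical partition (your $P$, $D$, $Z\setminus D$ are the paper's $\mathcal{P}$, $\mathcal{W}$, $\mathcal{U}$), prove the same upper bound $|P|+|Z\setminus D|-1$ for all single-elimination tournaments (the paper counts match types directly, you phrase the same count via forest/in-degree bounds on the elimination arborescence), and achieve it with the same caterpillar: a Hamiltonian path on $P$ joined to a Hamiltonian path on $D$, with the players of $Z\setminus D$ hung as leaves on popular backbone vertices.
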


\begin{proof}
Let $(N,\TT,V,t)$ be a given instance of \textsc{CTC-VM}. Since the popularity value of every player maps only to $\{0,1\}$, we categorize players with a popularity value of $1$ as \emph{popular} and those with a popularity value of $0$ as \emph{unpopular}. Next, we partition the players in $N$ into three sets as follows: 
\begin{itemize}
    \item Popular players, denoted by $\mathcal{P}$.
    \item Unpopular players who defeat all popular players, denoted by $\mathcal{W}$.
    \item Unpopular players who are defeated by at least one popular player, denoted by $\mathcal{U}$.
    
\end{itemize}

First, we show that the maximum achievable tournament value is $|\mathcal{P}| + |\mathcal{U}| - 1$, and this in fact holds for all single-elimination tournaments. Second, we provide a seeding strategy that achieves this maximum value.

For the first part, note that in any single-elimination tournament, matches involving players in $\mathcal{W}$ do not contribute to the tournament value, no matter who wins or loses. Matches where both players are in $\mathcal{U}$ similarly contribute nothing. A match between a player in $\mathcal{U}$ and a player in $\mathcal{P}$ contributes 1 if the player in $\mathcal{U}$ loses, hence these matches contribute at most $|\mathcal{U}|$. Matches between players in $\mathcal{P}$ can contribute a maximum of $|\mathcal{P}| - 1$ to the total tournament value. Consequently, the tournament value is bounded by $|\mathcal{P}| + |\mathcal{U}| - 1$.

For the second part, note that by \cref{prop:ham}, there exists a Hamiltonian path in both $G[\mathcal{P}]$\footnote{For a vertex subset $S \subseteq \mathcal{V}$ in a graph $\mathcal{G} = (\mathcal{V}, \mathcal{E})$, we use $G[S]$ to denote the subgraph of $G$ induced by $S$.} and $G[\mathcal{W}]$. Let $P$ denote such a Hamiltonian path in $G[\mathcal{P}]$, and let $p^*$ be the starting node of $P$. Let $W$ denote a Hamiltonian path in $G[\mathcal{W}]$, and let $w^*$ be the ending node of $W$. See \cref{fig5} for an illustration. Now, consider the Hamiltonian paths $W$ and $P$ as forming the backbone of a spanning caterpillar. This arrangement is valid since $w^{*}$ beats $p^{*}$ by definition of $\mathcal{W}$. Next, incorporate the vertices in $\mathcal{U}$ by attaching them as leaves to the vertex in $\mathcal{P}$ that beats them in the backbone. The resulting graph forms a spanning caterpillar with a tournament value of  $|\mathcal{P}|+|\mathcal{U}|-1$.  
\end{proof}

\begin{figure}[t]
 \centering
    \includegraphics[scale=.6]{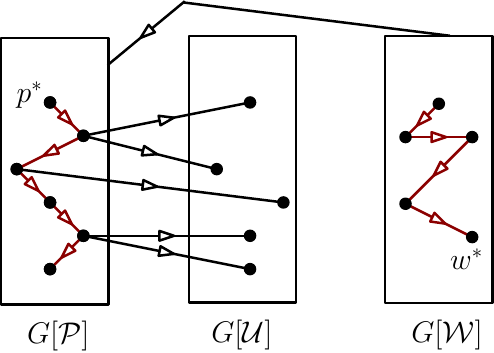}
    \caption{The red paths illustrate the Hamiltonian paths. A directed edge between two boxes represents all directed edges connecting the vertices from one box to those in the other.}
    \label{fig5}
\end{figure}
\medskip

Next, we prove that \textsc{CTC-VM} is \NPH when each player's popularity lies in $\{0,1,2\}$. Surprisingly, our hardness result holds even if there is a single player with popularity value 2. This gives a very precise threshold, as we have seen that for binary popularity values, the tournament value can be maximized in polynomial time.

We give a reduction from \TDM. The input of \TDM consists of a finite set $\mathcal{U} = X \cup Y \cup Z$ where $|X| = |Y| = |Z| = n$, and a collection $\mathcal{S} = \{S_1,  \ldots, S_m\}$ of triples, where each $S_i \subseteq X \times Y \times Z$. The problem is to determine if there exists a subset $\mathcal{S'} \subseteq \mathcal{S}$ consisting of $n$ triples such that each element in $\mathcal{U}$ appears in exactly one triple in $\mathcal{S'}$~\cite{garey1979computers}.

We will use the following proposition for \TDM.

\begin{restatable}{proposition}{PROP}
    For an instance of \TDM, if a family of subsets $\SS' \subseteq \SS$ covers elements $\UU' \subseteq \UU$, then $|\UU'| - |\SS'| \le 2n$, with equality iff $|\UU'| = 3n$ and $|\SS'| = n$. Hence if the given instance is a NO instance, then the inequality is always strict.
\label{prop:3dminequality}
\end{restatable}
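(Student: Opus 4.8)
The plan is to establish the bound by two elementary counting observations and then read off the equality case. Write $s := |\SS'|$ for the number of chosen triples, and interpret $\UU'$ as the set of elements covered, i.e.\ $\UU' = \bigcup_{S \in \SS'} S$. Since every triple of a \TDM instance contains exactly three elements (one each from $X$, $Y$, and $Z$), a union bound gives $|\UU'| \le 3s$. On the other hand, $\UU' \subseteq \UU$ and $|\UU| = 3n$, so trivially $|\UU'| \le 3n$. Combining the two, $|\UU'| - s \le \min\{\,3s - s,\ 3n - s\,\} = \min\{\,2s,\ 3n - s\,\}$.

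First I would verify that $\min\{2s,\ 3n-s\} \le 2n$ by a short case split: if $s \le n$ then $2s \le 2n$, while if $s > n$ then $3n - s < 2n$. In either regime the minimum is at most $2n$, which establishes $|\UU'| - |\SS'| \le 2n$.

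Next, for the equality characterization, observe that $|\UU'| - s = 2n$ forces both arguments of the minimum to be at least $2n$: we need $2s \ge 2n$ and $3n - s \ge 2n$, i.e.\ simultaneously $s \ge n$ and $s \le n$, whence $s = |\SS'| = n$. Substituting back gives $|\UU'| = 2n + s = 3n$, so $\UU' = \UU$; moreover the bound $|\UU'| \le 3s$ is now tight, which means the $n$ chosen triples are pairwise disjoint and cover all of $\UU$. This is precisely a perfect $3$-dimensional matching. Conversely, any such matching attains $|\UU'| - |\SS'| = 3n - n = 2n$, so equality holds exactly when $|\UU'| = 3n$ and $|\SS'| = n$.

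Finally, a NO instance of \TDM is by definition one admitting no perfect matching, so the equality case can never occur and the inequality is strict for every $\SS' \subseteq \SS$. I do not anticipate a serious obstacle, as the whole argument is a double counting bound; the only point requiring care is the equality analysis, where one must check that tightness of \emph{both} bounds (namely $|\UU'| = 3s$ together with $s = n$) genuinely corresponds to a disjoint cover of all of $\UU$, so that the link to the YES/NO distinction is exact rather than merely numerical.
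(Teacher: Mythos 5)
Your proof is correct and uses essentially the same argument as the paper: both rest on the two counting bounds $|\UU'| \le 3|\SS'|$ and $|\UU'| \le 3n$, combined with a case split on whether $|\SS'|$ is below or above $n$ to get strictness, and both extract the equality case $|\SS'| = n$, $|\UU'| = 3n$ from it. Your write-up is in fact slightly more complete than the paper's, since you spell out explicitly that tightness of both bounds forces the chosen triples to be pairwise disjoint and to cover all of $\UU$ (i.e.\ a perfect matching), which is the step that justifies the final claim about NO instances and which the paper leaves implicit.
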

\begin{proof}
Let $k = |\SS'|$ and $\ell = |\UU'|$. If $k < n$, then $\ell \le 3k$ since each set contains 3 elements, and $\ell - k \le 2k < 2n$. If $k > n$, then $\ell \le 3n$, and again $\ell - k \le 3n -k < 2n$. Thus if $\ell - k = 2n$, then $k = n$ and $\ell = 3n$.
\end{proof}

\begin{theorem}\label{thm:rb-np-com}
   \textsc{CTC-VM} is \textsf{NP}-complete for player-popularity-based tournament value functions when each player's popularity is either 0, 1, or 2.
\end{theorem}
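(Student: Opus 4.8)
The plan is to establish \NP-hardness by a reduction from \TDM, using \cref{prop:3dminequality} as the arithmetic engine. Given a \TDM instance with universe $\UU = X \cup Y \cup Z$ (where $|X|=|Y|=|Z|=n$) and triples $\SS = \{S_1,\dots,S_m\}$, I would introduce one \emph{element-player} $a_u$ with popularity $0$ for each $u \in \UU$, one \emph{set-player} $b_i$ with popularity $1$ for each $S_i$, and a single special player $r$ with popularity $2$ --- so the instance has exactly one player of popularity $2$, matching the claimed threshold. The strength graph $\TT$ would be oriented so that $r$ beats every set-player; each $b_i$ beats exactly the element-players $a_u$ with $u \in S_i$ and loses to all other element-players; every element-player beats $r$; and all remaining pairs (set--set and element--element) are oriented arbitrarily. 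The crucial structural features are that the out-neighbourhood of $r$ is precisely the set-players, and that an element-player can only be beaten by the set-players containing it.

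I would then pass to the spanning-caterpillar viewpoint and write the value of a seeding as $V = \sum_v p_v \cdot c(v)$, where $c(v)$ is the number of children of $v$ in the associated caterpillar arborescence (equivalently, the number of matches $v$ wins). Since only set-players and $r$ have nonzero popularity, this collapses to $V = 2\alpha + \delta + \gamma$, where $\alpha$ is the number of set-players that are children of $r$, $\delta$ the number of element-players parented by some set-player, and $\gamma$ the number of set-players parented by another set-player. Letting $\beta$ count the set-players parented by an element-player and taking the champion to be $r$ (which I would argue is optimal, since any other root weakly decreases $V$), every set-player has a parent, so $\alpha + \beta + \gamma = m$. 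Substituting gives the clean identity $V = m + \alpha - \beta + \delta$. Writing $\UU'$ for the element-players that are covered (parented by a containing set-player, so $\delta = |\UU'|$) and $\SS'$ for the \emph{used} set-players (those with at least one element-child), the value becomes governed by the quantity $|\UU'| - |\SS'|$ appearing in \cref{prop:3dminequality}.

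The heart of the argument, and the step I expect to be hardest, is the structural upper bound. The key observation is that a used set-player is an \emph{internal} node of the caterpillar (it has a child), and in any caterpillar every internal node lies on the single backbone path; since $r$ has only one backbone successor, at most one used set-player can be a child of $r$. This forces $\alpha \le m - |\SS'| + 1$, and together with $\delta = |\UU'|$ and $\beta \ge 0$ I obtain $V \le m + (m - |\SS'| + 1) + |\UU'| = 2m + 1 + (|\UU'| - |\SS'|)$. Now \cref{prop:3dminequality} yields $|\UU'| - |\SS'| \le 2n$, with equality only when $|\SS'| = n$ and $|\UU'| = 3n$, i.e.\ exactly when the used set-players form a perfect matching. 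Hence $V \le 2m + 2n + 1$ for every seeding.

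For the converse, given a perfect matching $\{S_{j_1},\dots,S_{j_n}\}$, I would build the caterpillar whose backbone is $r$ followed by $b_{j_1},\dots,b_{j_n}$ arranged along a Hamiltonian path of their induced sub-tournament (which exists by \cref{prop:ham}), attaching the three element-players of each $S_{j_l}$ as leaves of $b_{j_l}$ and all remaining set-players as leaves of $r$; this is a valid caterpillar attaining $V = 2m + 2n + 1$. Setting the target $t = 2m + 2n + 1$ then makes the \TDM instance a \YES-instance iff the constructed \CCVM instance admits a seeding of value at least $t$, and membership in \NP is immediate since a seeding certifies its own value in polynomial time. The two delicate points I would be careful to justify are that choosing $r$ as champion is without loss of generality, and that the popularity-$0$ assignment to element-players is essential: giving elements popularity $1$ would allow arbitrarily long element-chains to inflate $V$ independently of any matching, which is exactly what the construction must preclude.
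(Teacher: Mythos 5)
Your proposal is correct and takes essentially the same route as the paper's own proof: the same reduction from \TDM (element players with popularity $0$, set players with popularity $1$, one special player with popularity $2$, identical edge orientations), the same target $2m+2n+1$, the same forward construction via a Hamiltonian path on the matched set players attached after $r$, and the same reverse-direction bound $V \le 2m+1+(|\UU'|-|\SS'|)$ closed off by \cref{prop:3dminequality}. Your caterpillar child-counting identity $V = m + \alpha - \beta + \delta$ is simply a more formal phrasing of the paper's direct accounting (the paper's ``only one such player can also lose to $h$'' is exactly your ``at most one used set player is a child of $r$''), and it in fact makes your unproven ``champion is $r$'' assumption unnecessary, since a set-player root only replaces $m$ by $m-1$ in the identity and weakens the bound further.
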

\begin{proof}

The reduction from \TDM is as follows. We create a strength graph $\mathcal{T}$ with vertex set $N$. Let $H, M,$ and $L$ denote a partition of $N$ that corresponds to players with popularity 2, 1, and 0, respectively.
Corresponding to every element $a\in \mathcal{U}$, introduce a player $v_{a}$ in $L$. For each set $S\in \mathcal{S}$, introduce a player $v_S$ in $M$. If $a\in S$ for some $a\in \mathcal{U}$ and $S\in \mathcal{S}$, then the edge between $v_S$ and $v_a$ is directed from $v_S$ to $v_a$, else it is directed from $v_a$ to $v_S$. We introduce a new player $h$ in $H$ that beats every player in $M$ and loses to every player in $L$. The remaining edges in $\mathcal{T}$ are directed arbitrarily. We set the target value $t$ as 
$2(|\mathcal{S}|-n)+3n+2+(n-1)=2|\mathcal{S}|+2n+1$. The intuition behind the target value is that since each player in $M$ must choose between contributing to the tournament value by losing to $h$ (and thereby adding 2 points to the tournament value) or by defeating three players in $L$. Note that the factor $n-1$ arises because the players in $M$ who are part of the backbone can contribute additional points by playing among themselves. Also, the extra 2 is due to the fact that, at most, one player in the backbone that comes from $M$ can contribute to the tournament value by both beating some players and losing to $h$.

We need to show that $(\mathcal{U},\mathcal{S})$ is a YES-instance of \TDM iff $(N,\mathcal{T},V,t)$ is a YES-instance of \textsc{CTC-VM}. 

For the forward direction, let \( \mathcal{S}' \) denote the subset of $\mathcal{S}$ of size $n$ such that each element in $\mathcal{U}$ appears in exactly one triple in $\mathcal {S}'$. Now we will construct a spanning caterpillar whose tournament value is at least $2|\mathcal{S}|+2n+1$ as follows. Let $h$ belong to the backbone. For the sets in $\SS'$, let $M' \subseteq M$ denote the corresponding players. These players will be part of the backbone of the spanning caterpillar. Using \cref{prop:ham}, we can find a Hamiltonian path in the subgraph \( G[M'] \). We attach this Hamiltonian path after the player \( h \) in the backbone of the spanning caterpillar. This step is valid because player \( h\) beats every player in \( M \). Players in $L$ are attached to the player in $M'$ in the backbone that beats them (recall that $v_S$ beats $v_a$ iff $a \in S$). Players in \( M \) corresponding to sets not in $\SS'$ are attached to player \( h \) as leaves. Note that the value of the tournament is $2|\mathcal{S}|+2n+1$.

For the reverse direction, assume the tournament value is at least $2|\mathcal{S}|+2n+1$. Suppose, for the sake of contradiction, that $(\mathcal{U},\mathcal{S})$ is a NO-instance of \TDM. Consider a solution where \(k\) players in \(M\) beat exactly \(\ell\) players in \(L\). This implies that the maximum contribution to the tournament value from the players in \(L\) is only \(\ell\). Additionally, these \(k\)
 players can contribute at most \(k - 1\) points to the tournament value by playing among themselves. An additional 2 points can be achieved if one of these players loses to $h$, but only one player can benefit from this. For the remaining \(|\mathcal{S}| - k\) players in \(M\), the maximum contribution can be \(2(|\mathcal{S}| - k)\)
points if they all lose to $h$. 

Therefore, the total value that can be achieved is $2|\mathcal{S}| - k + 1 + \ell.$ For this to be at least $2|\mathcal{S}|+2n+1$, we need $\ell-k\geq 2n$. Since we assume that the given instance is NO instance, from Proposition~\ref{prop:3dminequality}, $\ell-k < 2n$, giving a contradiction.
\end{proof}

Next, we present an approximation algorithm for player-popularity-based \textsc{CTC-VM}, with an approximation factor that depends on the number of distinct popularity values.
\begin{theorem}
    There is a $\frac{1}{k-1}$-approximation algorithm for player-popularity-based \textsc{CTC-VM} where the range of the popularity values is a set of cardinality $k$. 
\end{theorem}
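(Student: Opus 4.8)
The plan is to reduce to the binary-popularity case solved in \cref{thm:pp-alg} via a threshold decomposition of the popularity values. Let the $k$ distinct values be $v_1 < v_2 < \cdots < v_k$ (assume $k \ge 2$, the case $k=1$ being trivial). For each index $j \in \{2,\dots,k\}$ define a binary instance $J_j$ on the same strength graph $\TT$, in which player $i$ has popularity $1$ if $p_i \ge v_j$ and $0$ otherwise, and let $W_j(\sigma)$ denote its tournament value under seeding $\sigma$. The first step is to record the pointwise identity
\[
p_i = v_1 + \sum_{j=2}^{k}(v_j - v_{j-1})\,\mathbf{1}[p_i \ge v_j],
\]
which holds because telescoping the nonzero terms (those with $v_j \le p_i$) recovers $v_1 + (p_i - v_1)$. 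Multiplying by $w_i(\sigma)$, summing over players, and using that every single-elimination tournament plays exactly $n-1$ matches (so $\sum_{i} w_i(\sigma) = n-1$ for \emph{every} $\sigma$) yields
\[
V(\sigma) = v_1 (n-1) + \sum_{j=2}^{k}(v_j - v_{j-1})\,W_j(\sigma).
\]
The crucial consequence is that the leading term is a seeding-independent constant, while the remaining $k-1$ terms are nonnegative multiples of binary-instance values.

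The algorithm then solves each of the $k-1$ instances $J_j$ optimally in polynomial time using \cref{thm:pp-alg}, obtaining seedings $\sigma_2^*, \dots, \sigma_k^*$ with $W_j(\sigma_j^*) = O_j := \max_\sigma W_j(\sigma)$, and outputs whichever of these seedings has the largest true value $V(\cdot)$. This runs in polynomial time since each binary subroutine does.

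For the analysis, let $\sigma^\star$ be an optimal seeding and $\mathrm{OPT} = V(\sigma^\star)$. The decomposition gives the upper bound $\mathrm{OPT} \le v_1(n-1) + \sum_{j=2}^k (v_j - v_{j-1}) O_j$. Conversely, dropping all but the $j$th nonnegative term shows $V(\sigma_j^*) \ge v_1(n-1) + (v_j - v_{j-1}) O_j$ for each $j$. Summing over $j$ and substituting the upper bound for $\sum_j (v_j - v_{j-1}) O_j$ gives $\sum_{j=2}^k V(\sigma_j^*) \ge \mathrm{OPT} + (k-2)\,v_1 (n-1) \ge \mathrm{OPT}$, since $v_1 \ge 0$. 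Hence the best of the $k-1$ computed seedings attains at least the average, namely $\tfrac{1}{k-1}\mathrm{OPT}$, which is the claimed guarantee.

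The individual steps are routine; the one point requiring care is the choice of decomposition, and that is the main obstacle. The naive level-set decomposition $p_i = \sum_{j} v_j\,\mathbf{1}[p_i = v_j]$ produces $k$ (scaled) binary instances and, by the same averaging argument, only a $\tfrac{1}{k}$ guarantee. Using \emph{monotone} thresholds instead produces exactly $k-1$ nontrivial instances and pushes the lowest value $v_1$ into the seeding-independent constant $v_1(n-1)$; this is precisely what sharpens the bound from $\tfrac{1}{k}$ to $\tfrac{1}{k-1}$.
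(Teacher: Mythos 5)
Your proposal is correct, and it shares the paper's algorithmic skeleton---solve $k-1$ binary instances via \cref{thm:pp-alg}, return the best resulting seeding, and argue by averaging---but the decomposition is genuinely different. The paper orders the values decreasingly as $v_1 > \cdots > v_k$, assumes without loss of generality that $v_k = 0$, and for each $i \in [k-1]$ solves the \emph{level-set} instance in which exactly the players of popularity $v_i$ keep a nonzero value; writing $V_i$ for the value contributed in an optimal seeding by matches won by popularity-$v_i$ players, it concludes $\mathrm{OPT} = \sum_{i=1}^{k-1} V_i$ and $\mathrm{ALG} \ge \max_{i} V_i \ge \mathrm{OPT}/(k-1)$. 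You instead use the telescoping \emph{threshold} decomposition $p_i = v_1 + \sum_{j=2}^{k} (v_j - v_{j-1}) \mathbf{1}[p_i \ge v_j]$, so your $j$th binary instance marks as popular every player with popularity at least $v_j$ (an upward-closed set rather than a level set), and the minimum value is absorbed into the seeding-independent constant $v_1(n-1)$. Both devices serve the same purpose---making the lowest popularity value ``free'' is exactly what sharpens $1/k$ to $1/(k-1)$---but your route makes this step explicit and self-contained, whereas the paper's ``without loss of generality $v_k=0$'' implicitly relies on the observation (which you prove via $\sum_i w_i(\sigma) = n-1$) that shifting all popularities by a constant changes every seeding's value by the same amount, so the normalization is approximation-preserving. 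A small bonus of your identity is the slightly stronger guarantee $\mathrm{ALG} \ge v_1(n-1) + \bigl(\mathrm{OPT} - v_1(n-1)\bigr)/(k-1)$; the only substantive difference in output is that your binary instances, and hence the candidate seedings, are not the same as the paper's, though both families are computed by the same subroutine and both analyses are sound.
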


\begin{proof}
    Let the range of the popularity values be $\{v_i: i\in[k]\}$ where we have $v_1>v_2>\cdots>v_k$. Let $P_i$ be the set of players whose popularity is $v_i$ for $i\in[k]$. Without loss of generality, we can assume that $v_k=0$. Our algorithm is as follows.
\begin{enumerate}
\item For every $i\in[k-1]$, run \textsc{CTC-VM}, by setting the value of every player except those in $P_i$ to be $0$, using the algorithm described in \Cref{thm:pp-alg}.
\item  Among these $k-1$ tournaments, select the one with the highest value as the output. 
\end{enumerate}

    Our algorithm runs in polynomial time since the algorithm in \Cref{thm:pp-alg} runs in polynomial time. We next prove its approximation guarantee.

    Let us consider a challenge the champ tournament $T$ of the highest value for the input instance. Let its value be OPT. Let $V_i$ be the sum of the values of the matches in $T$ where a player from $P_i$ has won for $i\in[k-1]$. Then we have
    \[ OPT = \sum_{i=1}^{k-1} V_i. \]
    We now have the following if ALG is the value of the tournament output by the algorithm.
    \[ ALG \ge \max_{i=1}^{k-1} V_i \ge \frac{1}{k-1} \sum_{i=1}^{k-1} V_i = \frac{OPT}{k-1}. \]
\end{proof}



\section{Win-Count-Based Tournament Value Functions} \label{sec:wcb}

We now consider win-count-based value functions, modeling situations where a player's popularity changes through the course of the tournament. For these value functions, the value a player brings to a match depends on how many matches the player has won. We first prove that \textsc{CTC-VM-Dag} is polynomial-time solvable for win-count-based tournament value functions by giving a dynamic programming algorithm. We note that, in fact, Theorem~\ref{thm:pp-alg2} is a more general result than Theorem~\ref{thm:pp-alg1}. However, it does not give an explicit value for the optimal seeding. Theorem~\ref{thm:pp-alg1} gives an explicit seeding and value, as well as a simpler greedy algorithm. Further, as discussed earlier, it allows us to show that for DAGs and player-popularity-based value functions, Challenge the Champ tournaments are optimal among all single-elimination tournaments.

\begin{restatable}{theorem}{THMA}\label{thm:pp-alg2}
    There is a polynomial-time algorithm for  \textsc{CTC-VM-Dag} for win-count-based tournament value functions.
\end{restatable}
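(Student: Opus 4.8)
The plan is to design a dynamic program over the players ordered by strength, exploiting the DAG structure. Since \TT is a DAG with all edges directed from larger to smaller indices, player $i$ can only beat players in $[i-1]$. The correspondence between seedings and spanning caterpillars lets me think of the task as building a caterpillar: I must choose a backbone (the sequence of players who win at least one match) and attach the remaining players as leaves to backbone players who beat them. A crucial observation is that on the backbone, the players must appear in strictly increasing order of strength, because each backbone player defeats its predecessor; hence any successful backbone is an increasing subsequence $i_1 < i_2 < \cdots < i_k = n$ (the tournament winner must be the globally strongest player $n$). This monotonicity is what makes a polynomial DP possible.

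First I would fix the backbone structure and reason about how wins are allocated. If the backbone is $i_1 < \cdots < i_k$, then the wins of player $i_j$ consist of defeating the previous backbone player $i_{j-1}$ (for $j \geq 2$) plus defeating some set of attached leaves drawn from the weaker players not already on the backbone. The value contributed by player $i_j$ is $\sum_{\ell=1}^{w_{i_j}} f_{i_j}(\ell)$, which depends only on how many matches $i_j$ wins, not on which specific weaker players it beats. So the problem decouples into: (i) choosing the increasing backbone, and (ii) distributing the non-backbone players as leaves among backbone players that can beat them, to maximize the total win-count value.

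The DP I would set up processes players from weakest to strongest (index $1$ to $n$) and tracks, as the state, something like the index of the current player together with how many players still need to be ``absorbed'' as wins by backbone players appearing later. A cleaner formulation: let me define a DP where I decide, for each player in increasing index order, whether it is a backbone vertex or a leaf, and I carry the count of wins available to be assigned. Concretely, I would use a state $D[i][s]$ = maximum total value achievable using players $\{1,\dots,i\}$, where $s$ is the number of these players that have been designated as leaves but not yet ``claimed'' (i.e., they still need to be beaten by a future backbone player that is stronger than them). When I introduce a stronger backbone player, it can claim any subset of the currently pending leaves that it beats (since it beats all weaker players in a DAG, it can claim any of them) plus its predecessor on the backbone; the value it contributes is the prefix-sum of its $f$ evaluated at the number of wins I assign it. Because the win-value contribution is a function only of the win count, and a backbone player beats every weaker player, I can greedily decide how many pending leaves each backbone player absorbs. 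I would carefully define the transition so that every non-backbone player is eventually claimed by exactly one stronger backbone player, and the final winner is forced to be player $n$ on the backbone.

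The main obstacle will be arguing that the win-count values can be allocated optimally within the polynomial DP despite $f_i$ being an arbitrary (non-monotone, non-concave) function of the number of wins. Since $f_i(\ell)$ need not be concave or even monotone, I cannot simply greedily assign as many leaves as possible to the strongest available player; the optimal number of wins for each backbone player is itself a choice. I would handle this by making the DP state encode, for each prospective backbone player, a choice of how many total matches it wins (ranging over $[n-1]$), and precomputing the prefix sums $F_i(w) = \sum_{\ell=1}^{w} f_i(\ell)$ so each transition costs $O(1)$ to evaluate. The bookkeeping that must be gotten right is the feasibility constraint: the total number of leaf-wins assigned across the backbone, plus the $k-1$ backbone-on-backbone wins, must exactly equal $n-1$ (every non-winner loses exactly once), and each leaf must be beatable by the backbone vertex that claims it — which in a DAG is automatic as long as the claiming backbone vertex has a larger index than the leaf. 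I expect the correctness proof to hinge on showing this exchange/counting argument is tight, yielding an algorithm with running time polynomial in $n$ (roughly $O(n^3)$ after accounting for the state space $O(n^2)$ and $O(n)$ transitions), which establishes the theorem.
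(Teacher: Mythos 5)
Your proposal is correct and takes essentially the same approach as the paper: the paper's proof is a dynamic program over players ordered by strength with state $\mathcal{A}(i,k)$ (maximum value using a subset of the $i$ weakest players with $k$ matches), transitions $\mathcal{A}(i,k)=\max_{0\le \ell\le k}\{F_i(\ell)+\mathcal{A}(i-1,k-\ell)\}$ using the same prefix sums $F_i$, and correctness via the same counting argument you sketch (enough unbeaten weaker players remain for the stronger player to claim, which is automatic in a DAG). Your state $D[i][s]$ with $s$ pending leaves is just a reparameterization of the paper's $(i,k)$ (one has $s=i-k-1$), and the caterpillar framing versus the paper's direct sub-tournament composition is cosmetic.
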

\begin{proof}
    We are given a set $N = \{1, 2, \ldots, n\}$ of players, a directed acyclic strength graph $\TT$, and win-value functions $f_i$ for each player. Since the strength graph $\mathcal{T}$ is a DAG, we define a linear ordering of the players based on their strengths. As before, we assume that the players are ordered as $n, n-1,\ldots, 1$, where player $n$ is the strongest and player 1 is the weakest.
    
    The win-value function $f_{i}(\ell)$ is the value contributed by player $i\in N$ to the tournament value for the $\ell$th match won. Let $F_i(\ell) = \sum_{k=1}^{\ell} f_i(k)$ be the total value obtained by player $i$ for $\ell$ wins. Our dynamic programming table has the entries of the form $\mathcal{A}(i,k)$, representing the maximum tournament value achievable by a Challenge the Champ tournament with $k$ matches and seeded by a subset of the players $\{i,i-1,\ldots 1\}$.

     For the base case, we set $\mathcal{A}(i,0)=0$ for all $i\in[n]$. This represents the scenario where no matches have been played, thus the tournament value is zero.
    
    For the recursive step, we need to compute $\mathcal{A}(i,k)$ for every pair $(i,k)$ where $i\in [n]$ and $k\leq i-1$ (as $i$ players can play at most $i-1$ matches). For this purpose, we use the following recursive relation: $$\mathcal{A}(i,k)=\max_{0\leq \ell\leq k}\{F_{i}(\ell)+\mathcal{A}(i-1,k-\ell)\}.$$

For player $i$, intuitively,  the recursive relation considers all possible ways the player 
$i$ can contribute to the total tournament value. Our aim is to compute $\mathcal{A}(n,n-1)$. 

Next, we show that the dynamic program is correct. For that purpose, we apply induction on 
$i$.
For the base case with $i=1$, observe that a single player cannot play any match; therefore, the tournament value is 0. For the inductive step, assume the algorithm correctly computes the maximum tournament value for all sets of players up to $i-1$ (with $k\leq i-1$). 

Let $\mathsf{OPT}(i,k)$ denote the optimal tournament value achievable using players $\{i,i-1,\ldots 1\}$ with exactly $k$ matches played. Furthermore, assume that $\mathsf{OPT}(i-1,k)=\mathcal{A}(i-1,k)$ for all values of $k\leq i-2$. 

First, we prove that $\mathsf{OPT}(i,k)\leq \mathcal{A}(i,k)$ for every $i\in [n]$ and $k\leq i-1$. Suppose in the optimal tournament seeding, player $i$ wins $\ell^{*}$ matches. Note that we assume player $i$ participates in at least one match, which means $\ell^* \geq 1$; otherwise, the proof would be trivial as $\mathsf{OPT}(i,k)$ will be equal to $\mathsf{OPT}(i-1,k)$ in that case. Furthermore, $\ell^* \geq 1$ implies that $k - \ell^* \leq i - 2$ (this is necessary because we want to use the induction hypothesis later). Next, by the recursive relation, we have $\mathcal{A}(i,k)\geq F_{i}(\ell^{*})+\mathcal{A}(i-1,k-\ell^{*})$. In the optimal solution, the value contributed by player $i$
is $F_{i}(\ell^{*})$ and the remaining value is achieved by the other players playing optimally in the remaining $k-\ell^{*}$ matches. Hence, $\mathsf{OPT}(i,k)=F_{i}(\ell^{*})+\mathsf{OPT}(i-1,k-\ell^{*})$. 
 Since  $\mathsf{OPT}(i-1,k-\ell^{*})=\mathcal{A}(i-1,k-\ell^{*})$ (by the inductive hypothesis), it follows that $\mathsf{OPT}(i,k)\leq \mathcal{A}(i,k)$.
 
 Next, we prove that  $\mathsf{OPT}(i,k)\geq \mathcal{A}(i,k)$ for every $i\in [n]$ and $k\leq i-1$.  In order to prove this, let  $\mathcal{A}(i,k)=F_{i}(\ell^{*})+\mathcal{A}(i-1,k-\ell^{*})$ for some $\ell^{*}\leq k$ such that $k-\ell^{*}\leq i-2$. 

Due to the induction hypothesis, we note that $\mathsf{OPT}(i-1,k-\ell^{*})=\mathcal{A}(i-1,k-\ell^{*})$. Note that when we use the optimal strategy for $i-1$ players playing $k-\ell^{*}$ matches, then at least $\ell^{*}$ players remain unbeaten $($as $i-1-(k-\ell^{*})\geq \ell^{*})$. Since player $i$ is stronger than all of these unbeaten players, if we allow player $i$ to play against all of these unbeaten players, the tournament value obtained by player $i$ winning $\ell^*$ matches is $F_i(\ell^*)$, and the remaining matches are played among the $i-1$ players optimally. Note that among these $\ell^{*}$ players that $i$ plays against, $\ell^{*}-1$ have not played any matches, and only one is the winner of the Challenge the Champ tournament conducted among the previous $i-1$ players. 
Then the value we obtain is $v_{i}(\ell^{*})+\mathcal{A}(i-1,k-\ell^{*})$ which is at most $\mathsf{OPT}(i,k)$. Thus, $\mathcal{A}(i,k) \leq \mathsf{OPT}(i,k)$ for every $i\in [n]$ and $k\leq i-1$.
\end{proof}

Next, we show that \textsc{CTC-VM} is \NPH for binary-valued win-count-based games. Recall that here, every player has a threshold $\lambda_i$, and the tournament value for a seeding $\sigma$ is $|\{i : w_i(\sigma) \ge \lambda_i\}|$. For this, we give a polynomial-time reduction from \textsc{Independent Set}. The input of \textsc{Independent Set} consists of an undirected graph $\GG=(\VV,\EE)$ and a positive integer $k \in \mathbb{N}$. The problem is to determine if there exists a subset $\WW \subseteq \VV$ of size $k$ so that every edge in $\EE$ is incident on at most one vertex in $\WW$.

\begin{restatable}{theorem}{THM} \label{thm:ccvm-nph-bin}
    \textsc{CTC-VM} is \NPH for binary-valued win-count-based tournament value functions.
\end{restatable}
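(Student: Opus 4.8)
The plan is to reduce from \textsc{Independent Set}. Given an instance $(\GG=(\VV,\EE),k)$, I would build a strength graph with one \emph{vertex player} $x_v$ for each $v\in\VV$ and one \emph{edge player} $y_e$ for each $e\in\EE$. The only players that beat $y_e$ are the two endpoints of $e$; every other vertex player loses to $y_e$ (i.e.\ $y_e \to x_w$ whenever $w\notin e$), and the remaining vertex--vertex and edge--edge pairs are oriented in a fixed, controlled way. This ``edge player beaten only by its endpoints'' rule is the source of the directed cycles the construction needs --- consistent with the fact that the DAG version is polynomial by \cref{thm:pp-alg2}. I set the threshold of each vertex player to $\lambda_{x_v}=\deg(v)+1$ and the threshold of every edge player to an unreachable value (say $|\VV|+|\EE|$), so that only vertex players can contribute to the tournament value, and I set the target $t=k$.

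The intended correspondence is that $x_v$ meets its threshold exactly when, in the induced spanning caterpillar, it ``captures'' all $\deg(v)$ of its incident edge players as children (plus one further win). Since each edge player $y_e$ can be the child of at most one backbone player, and only the endpoints $u,v$ of $e=\{u,v\}$ beat $y_e$, two adjacent vertex players cannot both capture their shared edge player. Thus the set of vertex players meeting their thresholds is forced to be an independent set, so value $\ge k$ should match an independent set of size $k$.

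For the forward direction I would take an independent set $S$ of size $k$, place $\{x_v : v\in S\}$ on the backbone (ordering them along a directed path via \cref{prop:ham}), attach each incident edge player $y_e$ as a leaf of the unique endpoint of $e$ lying in $S$ (well-defined precisely because $S$ is independent), and hang all remaining players as leaves; the caterpillar--seeding correspondence then gives a seeding of value at least $k$.

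The hard part will be the reverse direction, and the obstacle is intrinsic to the win-count objective: since only the \emph{number} of wins matters, a vertex player could a priori reach $\deg(v)+1$ through ``unintended'' wins --- beating other vertex players (as a backbone successor, or as leaves) instead of its incident edge players --- which would let two adjacent vertices both score and destroy the correspondence. Controlling this is the crux. I would exploit that every scoring player must lie on the \emph{single} backbone path, together with a counting argument comparing $\sum_{v\in S}(\deg(v)+1)$ against the $|\VV|+|\EE|-1$ total available wins, to argue that such substitutions cannot raise the number of scoring vertex players above the size of a genuine independent set. If the bare vertex/edge gadget does not by itself preclude this cheating, I would strengthen it with per-vertex private dummy players (adjusting the thresholds accordingly) so that vertex--vertex wins alone are provably insufficient to reach a threshold, thereby pinning the scoring set down to an independent set.
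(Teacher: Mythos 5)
Your reduction skeleton --- \textsc{Independent Set}, vertex players, edge players beaten only by their endpoints, degree-based thresholds, valueless edge players --- is the same as the paper's, but the gap you yourself flag in the reverse direction is genuine, and neither of your proposed repairs closes it; there is also a flaw in the forward direction. Forward: in the induced caterpillar, the deepest backbone player among $\{x_v : v\in S\}$ beats only its leaves, i.e., its $\deg(v)$ incident edge players, so it falls one win short of $\lambda_{x_v}=\deg(v)+1$ unless it happens to beat some leftover vertex player (impossible if, say, it is weakest in your fixed vertex--vertex orientation); moreover, edge players $y_e$ with $e\cap S=\emptyset$ cannot be hung as leaves at all, since only their endpoints beat them, so the described caterpillar is not even spanning. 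Patching the off-by-one (say, by adding a globally weakest dummy that every vertex player beats) then destroys soundness. Concretely, let $\GG$ be a triangle $\{u,v,w\}$ plus three isolated vertices $a,b,c$, so the maximum independent set has size $4$. With the dummy $x_0$ and a linear vertex--vertex order $x_u\to x_v\to x_w\to x_a\to x_b\to x_c\to x_0$, take the backbone $x_u\to x_v\to x_a\to x_b\to x_c$ with leaves $y_{uv},y_{uw}$ on $x_u$, leaves $y_{vw},x_w$ on $x_v$, and leaf $x_0$ on $x_c$: the \emph{adjacent} players $x_u,x_v$ both reach threshold $3$, and $x_a,x_b,x_c$ reach threshold $1$, giving value $5>4$. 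This cheat also passes exactly the counting test you propose: the thresholds of the five scoring players sum to $9$, which equals the total number of matches, so comparing $\sum_{v\in S}(\deg(v)+1)$ against the total win budget cannot rule it out. Your fallback of ``per-vertex private dummies'' is the right instinct, but it is left entirely unspecified, and it is precisely where the real work lies.

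The missing idea is quantitative: the edge gadgets must be made so large that vertex-versus-vertex wins can never substitute for even a single missing gadget. That is what the paper does. It creates $n^2$ identical edge players $p_e^{\ell}$, $\ell\in[n^2]$, for \emph{each} edge $e$, and sets the threshold of the vertex player $p_v$ to exactly $d(v)\cdot n^2$ (no $+1$, which also eliminates your forward off-by-one, since each independent-set player meets its threshold using precisely its own incident edge players). Soundness then follows from a two-line argument: if two adjacent scoring players $p_{v_i},p_{v_j}$ both reached their thresholds, they would need $n^2\bigl(d(v_i)+d(v_j)\bigr)$ wins in total, but since each player loses at most once in the tournament and the two share the block of $n^2$ copies for their common edge, the players that either of them can beat number at most $n+n^2\bigl(d(v_i)+d(v_j)-1\bigr)$, a contradiction for $n\ge 2$. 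In short, you have the correct gadget shape and correctly located the crux, but the proposal's concrete instantiation is false, and the decisive blow-up step is neither stated nor proved.
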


\begin{proof}
    
    Let $(\GG = (\VV,\EE),k)$ be a given instance of \textsc{Independent Set}. Let $n = |\VV|$, and $d(v)$ be the degree of vertex $v \in \VV$. We construct an instance $(N,\TT,f,t)$ of \textsc{CTC-VM} as
    \begin{align*}
        N &= \{p_v: v\in\VV\} \cup \{p_e^\el: e\in\EE, \el\in[n^2]\} \, .
    \end{align*}

    Thus there is a player $p_v$ for every vertex in $\GG$ whom we call a \emph{vertex player}, and $n^2$ players $(p_e^\el)_{\el \in [n^2]}$ for every edge $e$ in $\EE$ whom we call \emph{edge players}. We now describe the strength graph $\TT$ between the players. If $v \in e$ in $\GG$, then the player $p_v$ beats $p_e^\el$, otherwise $p_e^\el$ beats $p_v$. Thus a vertex player beats all edge players for incident edges, and is beaten by all other edge players. The directions of the remaining edges of the strength graph are arbitrary. An illustration of the construction of a strength graph from a given instance of \textsc{Independent Set} is shown in \cref{fig2}. The value of a vertex player $p_v$ is $1$ if she wins at least $d(v) \times n^2$ matches; otherwise, her value is zero. Thus the threshold $\lambda_{p_v} = d(v) \times n^2$. The value of all edge players is zero, irrespective of the number of matches she wins. The target value $t = k$. We claim that there is a seeding that achieves the target value if and only if the graph $\GG$ has an independent set of size $k$.

    In one direction, let $\WW\subseteq\VV$ form an independent set of size $k$, and assume $\WW= \{v_1, \ldots, v_k\}$. Let $N_\WW$ be the corresponding set of players $\{p_v : v \in \WW\}$, and let $\HH_\WW$ be a Hamiltonian path on the players in $N_\WW$. By reindexing the vertices, assume $\HH_\WW = (p_{v_k}, p_{v_{k-1}}, \ldots, p_1)$. Note that this implies that player $p_{v_{i+1}}$ beats player $p_{v_i}$ in the strength graph, and by construction and since $\WW$ is an independent set, each edge player $p_e^\ell$ is beaten by at most one player in $N_\WW$. Consider the seeding that starts with player $p_{v_1}$, then consists of all $d(v_1) \times n^2$ edges players beaten by $p_{v_1}$, then has player $p_{v_2}$ followed by all edge players beaten by $p_{v_2}$. We continue in this manner until $p_{v_k}$ and all edge players are beaten by this player. The remaining players are then added in this sequence arbitrarily. Since each player $p_v$ for $v \in \WW$ beats $d(v) \times n^2$ edge players, it is easy to see that this seeding has the desired tournament value.
    
    \begin{figure}[t]
 \centering
    \includegraphics[scale=.65]{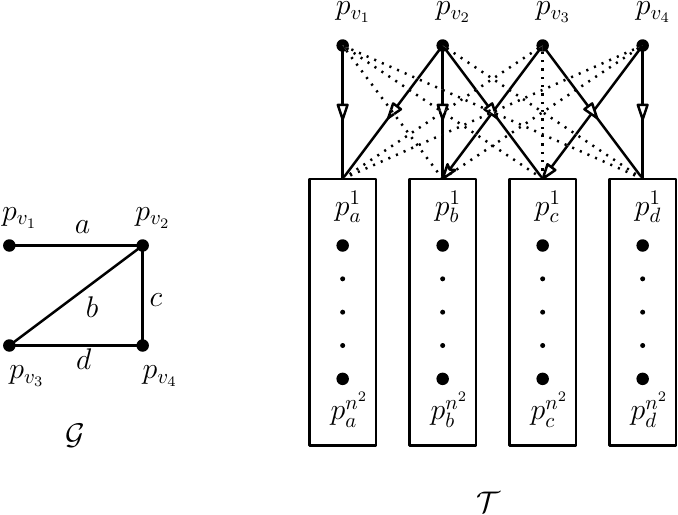}
    \caption{ An edge from a vertex to a box denotes all directed edges between that vertex and vertices within the box in the same direction. Edges not depicted are arbitrarily directed. The dotted edges are directed from the vertices in the box to the outside vertex.} 
    \label{fig2}
\end{figure}

    In the other direction, suppose there is a seeding with tournament value $k$. Clearly, there must be $k$ vertex players that have positive value, since only vertex players can have positive value. Let $N_\WW = \{p_{v_1}, \ldots, p_{v_k}\}$ be these vertex players, and $\WW = \{v_1, \ldots, v_k\}$ are vertices corresponding to these players in the graph $\GG$. We claim that the set $\WW$ must be an independent set in $\GG$. If not, suppose $v_i, v_j \in \WW$ are adjacent, with edge $\hat{e} = \{v_i,v_j\} \in \EE$. Then both $p_{v_i}$ and $p_{v_j}$ have outgoing edges to the $n^2$ edge players $p_{\hat{e}}^\ell$ in the strength graph. Consequently, at most $n + n^2 (d(v_i) + d(v_j) -1)$ players have incoming edges from either $v_i$ or $v_j$,\footnote{The additional $n$ players are due to vertex players possibly beaten by $p_{v_i}$ and $p_{v_j}$.} and hence for large enough $n$, $v_i$ and $v_j$ cannot each beat $d(v_i) \times n^2$ and $d(v_j) \times n^2$ players respectively. It follows immediately that $\GG$ has an independent set of size $k$, completing the proof.
\end{proof}

In the proof of \Cref{thm:ccvm-nph-bin}, the maximum value of any Challenge the Champ tournament obtained in the reduction is the same as the maximum size of an independent set in the \textsc{Independent Set} instance. Hence, we obtain the following from the known inapproximability of \textsc{Independent Set}~\cite{DBLP:conf/focs/Hastad96,DBLP:conf/focs/Khot01,DBLP:conf/stoc/Zuckerman06}.

\begin{corollary}
    For every real number $\eps>0$, no polynomial-time algorithm approximates the value of \textsc{CTC-VM} for binary-valued win-count-based tournament value functions within a factor of $n^{1-\eps}$ unless $\Pb=\NP$.
\end{corollary}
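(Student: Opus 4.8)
The plan is to observe that the reduction in \cref{thm:ccvm-nph-bin} is not merely a many-one reduction for the decision problem but is in fact \emph{value-preserving}, and then to push the known inapproximability of \textsc{Independent Set} through it. The single fact that does all the work is the one already isolated above: for the instance $(N,\TT,f,t)$ built from a graph $\GG = (\VV,\EE)$, the maximum tournament value equals $\alpha(\GG)$, the size of a maximum independent set of $\GG$. So there is no gap to be lost: the two optima coincide exactly, and what remains is to make the argument approximation-preserving rather than merely optimum-preserving.

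To do this I would make both conversions explicit and lossless. In the forward direction the construction in \cref{thm:ccvm-nph-bin} already turns an independent set of size $k$ into a seeding of tournament value $k$. For the reverse direction, given any seeding of value $s$, only vertex players can contribute (edge players have value $0$ regardless of their wins), so exactly $s$ vertex players meet their thresholds; the reverse direction of \cref{thm:ccvm-nph-bin} shows that the corresponding $s$ vertices form an independent set of $\GG$, and this set is read off in polynomial time. Hence a seeding of value $s$ yields, in polynomial time, an independent set of size $s$. Consequently, any polynomial-time algorithm approximating \textsc{CTC-VM} within a factor $\rho$ produces a seeding of value at least $\alpha(\GG)/\rho$, which converts to an independent set of size at least $\alpha(\GG)/\rho$, i.e.\ a $\rho$-approximation for \textsc{Independent Set}.

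Taking $\rho = n^{1-\eps}$ and invoking the known hardness of approximating \textsc{Independent Set} within $n^{1-\eps}$ \cite{DBLP:conf/focs/Hastad96,DBLP:conf/focs/Khot01,DBLP:conf/stoc/Zuckerman06} then gives the claim under $\Pb \neq \NP$. The point needing care, and the main obstacle, is purely the size parameter in which the factor is stated: the constructed instance has $|N| = n + n^2|\EE|$ players, which is polynomially larger than $n = |\VV|$. This blow-up is harmless for preserving polynomial running time, but it does affect the exponent, so I would state the factor with $n$ denoting the number of vertex players (equivalently $|\VV|$) — exactly the quantity that upper-bounds the optimum and in which the \textsc{Independent Set} gap transfers with no loss. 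The verification I expect to dwell on is therefore the bookkeeping: confirming that the reverse conversion is genuinely lossless (value $s$ in, independent set of size exactly $s$ out) and that the factor is measured against $|\VV|$ rather than the inflated total player count $|N|$, since measuring against $|N|$ would dilute the exponent by the degree of the blow-up.
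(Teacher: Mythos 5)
Your proposal is correct and follows the same route as the paper: the paper's entire proof of this corollary is the single observation that the reduction in \cref{thm:ccvm-nph-bin} is optimum-preserving, i.e., the maximum tournament value of the constructed instance equals $\alpha(\GG)$, so the $n^{1-\eps}$-inapproximability of \textsc{Independent Set} transfers directly. Where you go beyond the paper --- the caveat about which ``$n$'' the factor is measured in --- is a genuine catch, not mere bookkeeping. The constructed instance has $|N| = n + n^2|\EE|$ players, and on every such instance the optimum is $\alpha(\GG) \le n \le \sqrt{|N|}$ while value $1$ is always achievable (seed any single vertex player first, followed by all of its incident edge players), so the instances produced by this reduction can never certify an inapproximability factor better than $\sqrt{|N|}$ when the factor is measured in the total number of players; on the dense graphs for which \textsc{Independent Set} is actually hard to approximate within $n^{1-\eps}$, the transferred factor is only about $|N|^{(1-\eps)/4}$. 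Consequently, if the corollary's $n$ is read as the number of players of the \textsc{CTC-VM} instance (the paper's standing convention that $N=[n]$), the paper's one-sentence proof does not establish the stated exponent, whereas your restatement with $n$ denoting $|\VV|$ (equivalently, the number of vertex players) is exactly what the reduction proves; recovering a clean $|N|^{1-\eps}$ bound would require a more economical reduction with near-linear blow-up, which the threshold argument of \cref{thm:ccvm-nph-bin} (needing more than $n$ copies per edge) does not provide. In short: same argument as the paper, but your version is the careful one, and it exposes a real imprecision in how the corollary is stated.
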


We next consider linear-after-threshold value functions. Recall that now each player $i$ has a threshold $\lambda_i$, and for a seeding $\sigma$, the value obtained is $\sum_{i \in N} \max\{0, w_i(\sigma) - \lambda_i + 1\}$. By a modification of the previous reduction, we show next that \textsc{CTC-VM} is \NPH even for linear-after-threshold valuation functions. However, unlike Theorem~\ref{thm:ccvm-nph-bin}, the reduction is not approximation-preserving, hence we do not obtain the same approximation hardness result.


\begin{restatable}{theorem}{THMB}\label{thm:ccvm-nph-lin}
    \textsc{CTC-VM} is \NPH for linear-after-threshold valuation functions.
\end{restatable}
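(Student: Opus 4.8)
The plan is to reduce from \textsc{Independent Set}, reusing the gadget of \Cref{thm:ccvm-nph-bin} but amplifying it so that the linear-after-threshold semantics does not interfere with the encoding. As before, I would introduce a vertex player $p_v$ for each $v\in\VV$ and $n^2$ edge players $(p_e^\el)_{\el\in[n^2]}$ for each $e\in\EE$, with $p_v$ beating exactly the edge players of its incident edges, and keep the threshold $\lambda_{p_v}=d(v)\cdot n^2$, so that a vertex player can obtain positive value only by beating (almost) all of its incident edge players. The $n^2$ shared edge players of an edge $\{u,v\}$ then force the set of value-obtaining vertex players to be independent, exactly as in the binary case.

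The essential difficulty is that, unlike the binary-valued function, a linear-after-threshold vertex player that wins $w$ matches contributes $w-\lambda_{p_v}+1$ rather than just $1$. Hence any wins a vertex player collects beyond its threshold --- in particular wins against \emph{other} vertex players --- inflate the value. Summed over the value-obtaining vertices this ``slack'' can be as large as $n-1$, which would let a \textsc{no}-instance (whose largest independent set has size $k-1$) reach value up to $(k-1)+(n-1)$, destroying the gap against a target of $k$. The fix I would use is to make each qualifying vertex contribute a large, essentially fixed reward that dominates the slack: attach to each $p_v$ a private pool of $n-1$ dummy players that only $p_v$ beats, and give every edge player and dummy player threshold $|N|$ so that they never contribute. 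A vertex player on the backbone wins all $n-1$ of its private players ``for free,'' and since these wins sit above the threshold, a qualifying vertex now contributes at least $n$, while its total contribution is at most $n+y_v$, where $y_v\le n-1$ counts its wins against other vertex players.

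With this construction I would set the target to $t=nk$. For the forward direction, given an independent set $W$ of size $k$, I would place $\{p_v:v\in W\}$ on the backbone along a Hamiltonian path (\Cref{prop:ham}) and let each beat all of its private players and all of its incident edge players --- these are disjoint across $W$ by independence --- so each contributes at least $n$, for a total of at least $nk$. For the reverse direction, I would first rerun the counting argument of \Cref{thm:ccvm-nph-bin}: a value-obtaining vertex $v$ satisfies $e_v\ge d(v)n^2-2(n-1)$ (it can ``waste'' at most its $n-1$ private and $\le n-1$ vertex wins), so it wins at least $n^2-2(n-1)$ of the shared edge players of any incident edge; two adjacent value-obtaining vertices would then together win more than $n^2$ of the $n^2$ shared edge players of their common edge, which is impossible for $n\ge 3$ since each edge player is beaten exactly once. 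Hence the value-obtaining set $S$ is independent, so $|S|\le k-1$ in a \textsc{no}-instance, and the per-vertex bound gives total value at most $n|S|+\sum_{v\in S}y_v\le n(k-1)+(n-1)=nk-1<t$, completing the gap.

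The main obstacle is precisely this slack control: one must size the private pools and the target so that the fixed reward ($\ge n$) strictly exceeds the worst-case slack ($\le n-1$), while simultaneously checking that adding the private players does not let a vertex clear its threshold without winning its shared edge players (it can miss at most $2(n-1)$ edge players, which the independence count tolerates). I would also observe that, because the optimal value is now roughly $n$ times the independence number plus an additive error of up to $n-1$, rather than equal to it, this reduction --- unlike that of \Cref{thm:ccvm-nph-bin} --- does not preserve approximation ratios, which is why the $n^{1-\eps}$ inapproximability does not carry over.
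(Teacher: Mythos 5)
Your proof is correct, and it follows the paper's overall strategy --- reduce from \textsc{Independent Set} by reusing the gadget of \cref{thm:ccvm-nph-bin} and then neutralize the ``slack'' that linear-after-threshold values introduce via wins among vertex players --- but your slack-control mechanism differs from the paper's. The paper keeps exactly the same player set and instead \emph{lowers} each vertex player's threshold to $d(v)\cdot n^2 - 2n$ (with edge players given threshold $|N|$) and sets the target to $2nk$: each qualifying vertex player then earns at least $2n+1$, while the total slack from vertex-versus-vertex wins is at most $n$, so at least $k$ players must reach their thresholds, and the independence argument finishes as in the binary case. You instead keep the thresholds at $d(v)\cdot n^2$ and \emph{add} a private pool of $n-1$ dummy players per vertex player (with threshold $|N|$ so they never contribute), so that a qualifier's reward of at least $n$ dominates the global slack of at most $n-1$, with target $nk$. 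Both mechanisms are sound: your per-vertex bound (value at most $n+y_v$ with $\sum_v y_v \le n-1$, since the vertex-versus-vertex matches form a forest on $n$ nodes) and your disjointness argument (each shared edge player of an edge $\{u,v\}$ loses at most once, so two adjacent qualifiers would need $2n^2-4(n-1)>n^2$ shared edge players, impossible for $n\ge 3$) are exactly what is needed. The paper's variant is slightly more parsimonious, adding no new players and no $\Theta(n^2)$ blow-up in instance size, whereas yours leaves the thresholds untouched and makes the source of each qualifier's reward very explicit; your closing observation that the reduction is not approximation-preserving (unlike that of \cref{thm:ccvm-nph-bin}) matches the paper's remark.
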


\begin{proof}
    We again reduce from \textsc{Independent Set} to prove \NP-hardness. The reduced instance of \textsc{CTC-VM} is the same as in the proof of \Cref{thm:ccvm-nph-bin}, except for the valuation functions, which are as follows. Since the valuation functions are linear-after-threshold, it is enough to define the threshold of every player. The threshold $\lambda(p_v)$ of any vertex player $p_v$ is $(d(v)\times n^2)-2n$. The threshold of every edge player is $|N|$, the total number of players. Hence as before, only vertex players have positive values in any seeding. We claim that the optimal tournament has value at least $2nk$ if and only if $\GG$ has an independent set of size $k$. For the first direction, as in the proof of~\Cref{thm:ccvm-nph-bin}, an independent set of size $k$ gives us a seeding with $k$ vertex players, and each vertex player $p_v$ beats $d(v) \times n^2$ edge players corresponding to incident edges. Since the threshold is $\left((d(v)\times n^2\right)-2n$, each of these players gets value at least $2n$.

    For the other direction, given a seeding of value $2nk$, recall that only vertex players can obtain any value. Since there are $n$ player vertices, at most value $n$ can be obtained by beating other vertex players. The remaining $2nk-n$ value must be obtained by beating edge players. Each vertex player $p_v$ can beat at most $d(v) \times n^2$ edge players, and has threshold $\left(d(v)\times n^2\right)-2n$, and hence can obtain value at most $2n$. Thus at least $k$ vertex players must reach their threshold. Let $N_\WW = \{p_{v_1}, \ldots, p_{v_k}\}$ be these vertex players, and $\WW = \{v_1, \ldots, v_k\}$ are vertices corresponding to these players in the graph $\GG$. Following the earlier proof, we claim that the set $\WW$ must be an independent set in $\GG$. If not, suppose $v_i, v_j \in \WW$ are adjacent, with edge $\hat{e} = \{v_i,v_j\} \in \EE$. Then both $p_{v_i}$ and $p_{v_j}$ have outgoing edges to the $n^2$ edge players $p_{\hat{e}}^\ell$ in the strength graph. Consequently, at most $n + n^2 (d(v_i) + d(v_j) -1)$ players have incoming edges from either $v_i$ or $v_j$, and hence for large enough $n$, $v_i$ and $v_j$ cannot each beat $d(v_i) \times n^2 - 2n$ and $d(v_j) \times n^2 - 2n$ players respectively. It follows immediately that $\GG$ has an independent set of size $k$, completing the proof.
\end{proof}

\section{Pair-Based Tournament Value Functions} \label{sec:ggv}

For pair-based tournament value functions, we show that \textsc{CTC-VM-Dag} is \NPH, even for binary valuations. Our proof gives a reduction from \TDM. The reduction is similar to the proof of Theorem~\ref{thm:rb-np-com}, but the added flexibility of pair-based value functions allows us to slightly simplify the proof. 

\begin{restatable}{theorem}{THMC} \label{thm:dag:npc}
    \textsc{CTC-VM-Dag} is \textsf{NP}-complete for pair-based tournament value functions when the tournament value function maps to $\{0,1\}$.
\end{restatable}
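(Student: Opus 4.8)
The plan is to give a polynomial-time reduction from \TDM, closely mirroring the construction in Theorem~\ref{thm:rb-np-com} but exploiting the extra freedom of pair-based values so that the gadget player $h$ is no longer needed. Given a \TDM instance $(\UU = X \cup Y \cup Z, \SS)$ with $|X|=|Y|=|Z|=n$, I would build a DAG strength graph $\TT$ with one \emph{element player} $v_a$ for each $a \in \UU$ and one \emph{set player} $v_S$ for each $S \in \SS$. I would orient edges so that $v_S$ beats $v_a$ whenever $a \in S$, and so that the whole graph is acyclic (for instance, order all set players above all element players, and within each group respect some fixed linear order; edges between a set player and an element it does not contain, and all intra-group edges, are oriented to be consistent with this linear order). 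The pair-based value is binary: $f_{\{v_S, v_a\}} = 1$ exactly when $a \in S$, and every other pair has value $0$. Thus the only matches that contribute value are ``covering'' matches where a set player beats one of its three elements, and each such match contributes $1$.

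The key observation is the correspondence between seedings and spanning caterpillars established earlier: the tournament value of a seeding equals the number of value-$1$ edges appearing in its caterpillar. First I would argue the forward direction: if $\SS' \subseteq \SS$ is an exact cover of size $n$, put the corresponding $n$ set players on the backbone (ordered by a Hamiltonian path via \Cref{prop:ham}, which is valid as these form a complete subgraph of a DAG, hence linearly ordered), and attach each element player $v_a$ as a leaf to the unique set player $v_S \in \SS'$ with $a \in S$. The remaining set players are attached as leaves wherever the caterpillar structure permits. This caterpillar realizes all $3n$ covering matches, so the tournament value is at least $3n$; I would set the target $t = 3n$.

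For the reverse direction I would bound the value of an arbitrary seeding. In any spanning caterpillar, each leaf contributes at most one value-$1$ edge (the single match it loses), and each element player $v_a$ can contribute value $1$ only by losing to a set player containing it. The value is therefore at most the number of element players that are ``covered'' by the set players that beat them in the caterpillar. Because the set players on the backbone plus the set players serving as internal attachment points form a family $\SS'$ covering exactly the set $\UU'$ of element players that contribute value, I would invoke \Cref{prop:3dminequality}: writing $k = |\SS'|$ (set players actually used to cover) and $\ell = |\UU'| \le $ value, we have $\ell - k \le 2n$ with equality only in the exact-cover case, and more directly $\ell \le 3n$ with $\ell = 3n$ forcing an exact cover. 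Attaining value $3n$ thus forces every element player to be covered by exactly one used set player, i.e.\ the used set players form an exact cover, yielding a \YES instance of \TDM. Membership in \NP is immediate since a seeding is a polynomial-size certificate whose value is checkable in polynomial time.

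The main obstacle I anticipate is the reverse direction's counting argument: I must ensure that a value-$3n$ caterpillar genuinely corresponds to an exact cover rather than merely covering all $3n$ elements with possibly overlapping or too many set players. The caterpillar structure is what makes this clean --- each element player is a leaf attached to exactly one set player, so ``covered'' element players are partitioned among the set players used, and achieving $\ell = 3n$ with the structural constraint that distinct covered elements need distinct covering-incidences forces a genuine exact cover. Making this partition argument airtight (in particular handling set players that appear on the backbone versus as internal nodes, and verifying no element is double-counted) is where I would spend the most care, and \Cref{prop:3dminequality} is the lever that closes it.
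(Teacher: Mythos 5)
Your reduction has a genuine and fatal gap in the reverse direction: with target $t = 3n$, the value function only counts how many element players lose to a set containing them (your $\ell$), and provides no mechanism whatsoever to penalize the number of set players used to do the covering (your $k$). Proposition~\ref{prop:3dminequality} does \emph{not} say that $\ell = 3n$ forces an exact cover; it says $\ell - k \le 2n$ with equality iff $\ell = 3n$ \emph{and} $k = n$. Since your target controls only $\ell$, value $3n$ is achievable whenever the family $\SS$ merely \emph{covers} $\UU$ --- a condition checkable in polynomial time --- regardless of whether an exact cover exists. Concretely: place \emph{all} set players on the backbone in their linear order (a valid directed path in your DAG), and attach each element player as a leaf to an arbitrary set player containing it. Every element then contributes $1$, giving value exactly $3n$. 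For instance, with $n=2$ and $\SS = \{\{x_1,y_1,z_1\}, \{x_1,y_2,z_2\}, \{x_2,y_1,z_2\}\}$ there is no exact cover, yet this caterpillar attains $3n = 6$; your reduction maps a \NO instance of \TDM to a \YES instance of \textsc{CTC-VM-Dag}.

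This is precisely why the paper's proof keeps an analogue of the gadget player: it introduces a player $c$ who beats everyone, with every match between $c$ and a set player worth $1$, and sets $t = m + 2n + 1$. A set player used for covering must sit on the backbone, and since the backbone is a path, $c$ can beat at most one backbone player; hence each covering set player \emph{forfeits} the value-$1$ match it would have given $c$ as a leaf. This yields the bound (value) $\le 1 + m - |A'| + |B'|$, so reaching the target forces $|B'| - |A'| \ge 2n$, and now Proposition~\ref{prop:3dminequality} applies with both quantities controlled. Any correct reduction needs some such device that charges for each set used; dropping $c$ (or an equivalent penalty) cannot work. Separately, your forward direction also has a smaller flaw --- a non-chosen set player that is strongest in the linear order cannot be attached as a leaf to any backbone player --- though that one is repairable by putting all set players on the backbone; the reverse direction is not.
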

\begin{proof}
For a given instance $(\UU,\SS)$ of \TDM with $|\UU| = 3n$ and $|\SS| = m$, the reduction creates a player $a_S$ for each set $S \in \SS$ (called \emph{set players}), a player $b_u$ for each element $u \in \UU$ (called \emph{element players}), and a new player $c$. There are thus $m +3n + 1$ players. In the strength graph $\TT$, $c$ beats all other players, and each $a_S$ beats all players $b_u$. All other edges in $\TT$ are directed arbitrarily. Each match between $c$ and any set player has value $1$. Each match between a set player $a_S$ and an element player $b_u$ has value 1 if $u \in S$ in the \TDM instance, otherwise it has value 0. All other matches have value 0. In particular, any match between $c$ and the element player, two element players, between two set players, or between set player $a_S$ and element player $b_u$ for $u \not \in S$ has value 0. The threshold $t = (3n+1) + (m-n) = m+2n+1$. See \cref{fig3} for an illustration.

We now show that the tournament value threshold $t$ is reached if and only the given \TDM instance is a YES instance. Suppose there exists a family of sets $\SS' \subseteq \SS$ of size $n$ that cover $\UU$. We construct the spanning caterpillar as follows. From Proposition~\ref{prop:ham}, let $H$ be a Hamiltonian path on the vertices $\{a_S\}_{S \in \SS'}$, starting from player $a^*$. Note that each element player $b_u$ is beaten by some player $a_S$ for $S \in \SS'$. The backbone consists of player $c$, followed by the Hamiltonian path $H$ starting from player $a^*$. For the leaves, note that $c$ defeats each set player. Then each set player $a_S$ for $S \not \in \SS'$ is attached as a leaf to player $c$. All element players are attached as leaves to the player $a_S$ in the backbone that beats them and gets value 1 in doing so. Thus each set player in the backbone beats 3 element players. Then in this seeding, player $c$ beats player $a^*$ at the head of $H$, and all players $a_S$ for $S \not \in \SS'$, for a total value of $1+m-n$. See \cref{fig3} for an illustration. Each of the $n$ set players in the backbone beats 3 element players and hence gets a total value of $3n$. Thus, if the given instance is a YES instance, the given seeding gets the value $3n + 1+m -n$ $=m+2n+1$, which is the threshold.

If the given instance is a NO instance, consider any spanning caterpillar. Since $c$ beats every other player, she must be at the head. Let $A'$ be the set players that beat at least one element player and contribute a value of 1 as a result, and let $B'$ be the element players beaten by a set player. Since set player $a_S$ can beat element player $b_u$ and simultaneously achieve a value of 1 only if $u \in S$ in the \TDM instance, from Proposition~\ref{prop:3dminequality}, $|B'| - |A'| < 2n$. The players in $A'$ must all be in the backbone. Then since the backbone must be a path, $c$ can beat at most one player from $A'$. Hence $c$ gets total value at most $1 + (m-|A'|)$. Further, the set players in the backbone beat $|B'|$ element players, and hence get value exactly $|B'|$. As observed, element players in the backbone do not contribute any value, and hence the value obtained from this seeding is $1 + m - |A'| + |B'|$ $< 1+m + 2n$ $=t$.
\end{proof}
\begin{figure}
 \centering
 \begin{minipage}{0.38\textwidth}
     \centering
     \includegraphics[width=\textwidth]{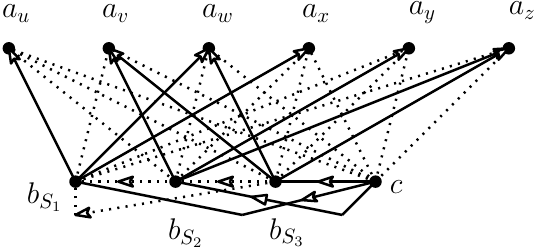}

 \end{minipage}
 \begin{minipage}{0.29\textwidth}
     \centering
     \includegraphics[width=\textwidth]{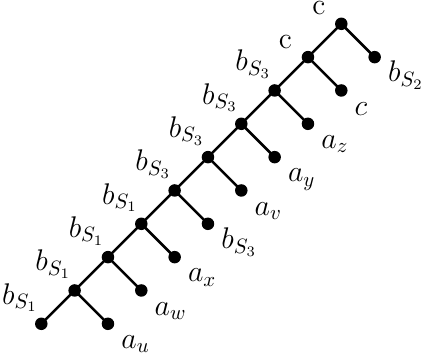}

 \end{minipage}
 
 \caption{In the left, we have the strength graph $\mathcal{T}$ of \textsc{CTC-VM-Dag} corresponding to \TDM instance $(\mathcal{U},\mathcal{S})$, where $\mathcal{U}=\{\{u,v\},\{w,y\},\{x,z\}\}$ and $\mathcal{S}=\{S_{1}=\{u,w,x\},S_{2}=\{v,w,z\},S_{3}=\{v,y,z\}\}$. Here, bold edges have value 1, non-depicted edges are directed arbitrarily with value 0, and dotted edges with no directions are directed from clause players to set players and have value 0. In the right, we have a spanning caterpillar corresponding to solution $\mathcal{S'}=\{S_{1},S_{3}\}$ of the \TDM.}
 \label{fig3}
\end{figure}

Consider the following problem. Given a complete directed acyclic graph $\GG$ with binary weights $w_e \in \{0,1\}$ on each directed edge, the problem is to find a maximum weight caterpillar arborescence --- i.e., an arborescence of maximum weight such that the removal of all vertices with out-degree zero in the arborescence gives a directed path. We can show this problem is also \NPH, since there is a direct reduction from \textsc{CTC-VM-Dag} with binary pair-based tournament value functions. We simply let $\GG$ be the strength graph $\TT$, and set the weight $w_e$ of an edge $e=(i,j)$ to be the corresponding value for the pair of players $i,j$. It follows from Theorem \ref{thm:dag:npc} that this problem is also \NPH.

\begin{restatable}{corollary}{CORR} \label{corr:sc}
  Computing a maximum weight caterpillar arborescence in a DAG is \NPH.  
\end{restatable}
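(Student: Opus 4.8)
The plan is to prove \cref{corr:sc} by the value-preserving reduction already sketched: a direct reduction from \textsc{CTC-VM-Dag} with binary pair-based value functions, which is \NPH by \cref{thm:dag:npc}. Given such an instance $(N,\TT,V,t)$, where the pair value of players $i$ and $j$ (with $i$ beating $j$) is $f_{ij}\in\{0,1\}$, I take the directed graph $\GG$ to be the strength graph $\TT$ itself---which is a complete DAG---and assign each directed edge $e=(i,j)$ the weight $w_e=f_{ij}$. I then ask whether $\GG$ admits a caterpillar arborescence of total weight at least $t$, and argue that this holds precisely when the original tournament instance is a \yes-instance.

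First I would invoke the bijection between seedings of a Challenge the Champ tournament and spanning caterpillar arborescences of $\TT$ established earlier in the preliminaries: every seeding $\sigma$ induces a spanning caterpillar arborescence whose directed edges, each oriented from the winner to the loser, are exactly the matches played, and conversely every spanning caterpillar arborescence arises from some seeding. The key step is the resulting value--weight identity. For a seeding $\sigma$ and its caterpillar $A$, a player $j$ lies in $M_i(\sigma)$ precisely when $(i,j)$ is an edge of $A$, so the pair-based value $V(\sigma)=\sum_{i\in N}\sum_{j\in M_i(\sigma)}f_{ij}$ equals $\sum_{e\in A}w_e$. Hence a seeding of value at least $t$ exists if and only if $\GG$ has a spanning caterpillar arborescence of weight at least $t$; as the construction is clearly polynomial-time, \NP-hardness transfers from \cref{thm:dag:npc}.

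The one point needing care---which I expect to be the only real obstacle---is that the maximum weight caterpillar arborescence problem, as stated, does not a priori require the caterpillar to be spanning, whereas seedings correspond to spanning caterpillars. I would close this gap by observing that the edge weights are nonnegative and $\TT$ is complete: given any caterpillar arborescence on a proper vertex subset $S$ with root $r$, and any $u\notin S$, I can attach $u$ as a new leaf of $r$ when $r$ beats $u$, or prepend $u$ to the backbone via the edge $(u,r)$ when $u$ beats $r$. Either operation preserves the caterpillar structure and does not decrease the total weight, so iterating yields a spanning caterpillar of weight at least that of the original. Consequently a maximum weight caterpillar arborescence may always be taken to be spanning, the two optimization problems coincide in value, and the reduction is complete.
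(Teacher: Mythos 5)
Your proof is correct and takes essentially the same route as the paper, which proves \cref{corr:sc} only by sketching exactly this reduction (let $\GG$ be the strength graph $\TT$ and set $w_{(i,j)} = f_{ij}$, then invoke \cref{thm:dag:npc}), leaving the seeding--caterpillar correspondence and the value--weight identity implicit. Your extra step showing that any non-spanning caterpillar arborescence can be extended to a spanning one of no smaller weight (using completeness of $\TT$ and nonnegativity of the weights) is a legitimate and correct patch for a detail the paper's one-paragraph argument glosses over.
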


\section{Conclusion and Open Problems} \label{sec:conc}

In this paper, we initiate the study of tournament value maximization for Challenge the Champ tournaments and contribute to the growing body of research on tournament fixing and design problems. Our results provide a complete and comprehensive picture of the computational complexity of value maximization, including for binary and ternary value functions. Theorems~\ref{thm:pp-alg1} and~\ref{thm:pp-alg}, in fact, provide upper bounds on value maximization for \emph{all} single-elimination tournaments. En route, we show interesting connections to Hamiltonian paths in complete directed graphs and spanning caterpillar arborescences. 

An obvious question for future research in value maximization is approximation. All of our hardness results are for binary or ternary valuations. These results hence show strong NP-hardness, and rule out fully polynomial time approximation schemes (FPTAS), unless P = NP. However, weaker approximations may be possible. We believe that given that value maximization is a practical concern, this is an interesting and important direction for research.

A second question is with regard to parameterized complexity. There are many relevant and natural parameters worthy of investigation. Our paper shows that for win-count-based tournament value functions, value-maximization is easy if the strength graph is a DAG but is NP-hard otherwise, even for binary values. An obvious parameter to try is then the size of the \emph{feedback vertex set} or the \emph{feedback arc set} of the strength graph. Also, for player-popularity-based games with ternary values, the problem is \NPH even with a single player with value 2. It is possible that using the number of 0 or 1 value players as a parameter would be helpful. 

Lastly, value maximization in other tournament formats, including knockout tournaments and extended stepladder tournaments, has only received limited attention and presents a rich and important avenue for further research.

   \bibliographystyle{abbrvnat}
   \bibliography{references}

\end{document}